\newtheorem{theorem}{Theorem}[section]
\newtheorem{corollary}[theorem]{Corollary}
\newtheorem{proposition}[theorem]{Proposition}
\theoremstyle{definition}
\newtheorem{definition}[theorem]{Definition}
\newtheorem{remark}[theorem]{Remark}
\newtheorem{example}[theorem]{Example}
\newcommand*{\Reeb}{\mathcal{R}}
\newcommand{\R}{\ensuremath{\mathbb{R}}}
\newcommand{\F}{\ensuremath{\mathbb{F}}}
\newcommand*{\contr}[1]{\iota_{#1}}
\DeclareMathOperator{\ad}{ad}
\DeclareMathOperator{\Ad}{Ad}
\DeclareMathOperator{\tr}{tr}
\begin{document}
	\title{Reduction by symmetries of contact mechanical systems on Lie groups}
	\author{
		{\bf\large Alexandre Anahory Simoes$^{1}$, Leonardo Colombo$^{2}$,}
		\\
		{\bf\large Manuel de León$^{3}$}, {\bf\large Juan Carlos Marrero$^{4}$,} \\{\bf\large David Martín de Diego$^{5}$, Edith Padrón$^{6}$}\hspace{2mm}
		\vspace{1mm}\\
		{\it\small $^{1}$ IE School of Science and Technology, Madrid, Spain }\\
        {\it\small $^{2}$ Centre for Automation and Robotics (CSIC-UPM),}
        {\it\small Arganda del Rey, Spain}\\
        {\it\small $^{3}$ Instituto de Ciencias Matematicas (CSIC) and}\\
        {\it\small Real Academia de Ciencias, Madrid, Spain }\\
		{\it\small $^{4, 6}$ ULL-CSIC Geometría Diferencial y Mecánica Geométrica,}\\
        {\it\small Departamento de  Matemáticas, Estadística e Investigación Operativa and }\\
        {\it\small Instituto de Matemáticas y Aplicaciones (IMAULL), Faculty of Sciences,}\\
        {\it\small University of La Laguna, Canary Islands, Spain }\\
        {\it\small $^{5}$ Instituto de Ciencias Matematicas  (CSIC), Madrid, Spain }}

\maketitle

\begin{abstract}
	We study the dynamics of contact mechanical systems on Lie groups that are invariant under a Lie group action. Analogously to standard mechanical systems on Lie groups, existing symmetries allow for reducing the number of equations. Thus, we obtain Euler-Poincaré-Herglotz equations on the extended reduced phase space $\mathfrak{g}\times \R$ associated with the extended phase space $TG\times \R$, where the configuration manifold $G$ is a Lie group and $\mathfrak{g}$ its Lie algebra. Furthermore, we obtain the Hamiltonian counterpart of these equations by studying the underlying Jacobi structure. Finally, we extend the reduction process to the case of symmetry-breaking systems which are invariant under a Lie subgroup of symmetries.
\end{abstract}

\let\thefootnote\relax\footnote{\noindent AMS {\it Mathematics Subject Classification (2020)}. Primary 70G45, 70G65, 70G75; Secondary  53D05, 53D10, 53D17.
	\noindent Keywords: Reduction by symmetries,  contact mechanical systems, Jacobi structures in mechanics, Euler-Poincar\' e equations, Lie-Poisson equations, Herglotz principle. }

\section{Introduction}

Contact Hamiltonian and Lagrangian systems have deserved a lot of attention in recent years (see \cite{Bravetti2017,Bravetti2018}, \cite{deLeon2018,deLeon2019}, \cite{anahory2021geometry}, and references therein).
One of the most relevant features of contact dynamics is the absence of conservative properties
contrarily to the conservative character of the energy in symplectic dynamics; indeed, 
they have a dissipative behavior. This fact suggests that contact geometry may be the appropriate framework to model many physical and mathematical problems with dissipation we find in thermodynamics, statistical physics, systems with impulsive effect and discontinuities, quantum mechanics, gravity or control theory, among many others (see \cite{de2020review}, \cite{colombo2022contact}, \cite{lopez2022nonsmooth}, \cite{de2023optimal}, and references therein).

As an illustrative example, consider the motion of a rigid body under a linear dissipation term. The equations of motion for this kind of systems are of the general form
\begin{equation}\label{damped:rigid:body}
\mathbb{I}\dot{\xi}=\mathbb{I}\xi \times \xi + C\xi,
\end{equation}
where $\xi\in\mathbb{R}^{3}$ denotes the body angular velocity, $\mathbb{I}\in\mathbb{R}^{3\times 3}$ is the inertia tensor, and $C\in \mathbb{R}^{3 \times 3}$ denotes a constant positive semi-definite matrix called the damping matrix (see \cite{shen2003asymptotic}). In the special case where $C=\gamma \mathbb{I}$, with $\gamma$ a non-zero real number, we will see that equations \eqref{damped:rigid:body} are given by a contact version of Euler-Poincaré equations (see \cite{holm1998euler}) for a reduced contact Lagrangian.




As it is noted in \cite{holm2009geometric}, in the case of the rigid body on $SO(3)$ we can rewrite the Lagrangian function $L:TSO(3)\to\mathbb{R}$ for this system as
\begin{equation*}
L(R,\dot{R})=\frac{1}{2}\tr(\dot{R}\mathbb{I}\dot{R}^{T}),
\end{equation*} with $\mathbb{I}$ symmetric and constant. Moreover, due to the $SO(3)$-invariance of $L$, we may define the reduced Lagrangian $l:\mathfrak{so}(3) \rightarrow \mathbb{R}$ given by
\begin{equation*}
l(\xi)=\frac{1}{2}\langle\mathbb{I}\xi,\xi\rangle.
\end{equation*}

The Euler-Poincaré equations applied to the reduced Lagrangian function $l$ give the Euler equations for the rigid body:
\begin{equation*}
\mathbb{I}\dot{\xi}=\mathbb{I}\xi \times \xi.
\end{equation*}
Now, we can consider a slight modification of these constructions and start with a contact Lagrangian function $L: TSO(3)\times \mathbb{R} \rightarrow \mathbb{R}$ given by

\begin{equation*}
L(R,\dot{R},z)=\frac{1}{2}\tr(\dot{R}\mathbb{I}\dot{R}^{T}) + \gamma z,
\end{equation*} where $\gamma\in\mathbb{R}$. This motivates the definition of a contact reduced Lagrangian function $l:\mathfrak{so}(3)\times \mathbb{R} \rightarrow \mathbb{R}$ given by
\begin{equation}\label{rigid:body:contact}
l(\xi, z)=\frac{1}{2}\langle\mathbb{I}\xi,\xi\rangle + \gamma z.
\end{equation}
The main purpose of this work is to introduce the \textit{Euler-Poincaré-Herglotz equations} as a generalization of Euler-Poincaré equations for contact Lagrangian functions, which are given by the expression
\begin{equation*}
\frac{d}{dt}\frac{\delta l}{\delta \xi}=\text{ad}_{\xi}^{*}\frac{\delta l}{\delta \xi}+\frac{\delta l}{\delta \xi}\frac{\partial l}{\partial z}.
\end{equation*}
Applying them to the Lagrangian function \eqref{rigid:body:contact}, we obtain equations \eqref{damped:rigid:body} with $C=\gamma \mathbb{I}$.

In this situation, the reduced Legendre transform $\mathbb{F}l:\mathfrak{g}\times\mathbb{R}\to\mathfrak{g}^{*}\times\mathbb{R}$ is given by the map $\F l(\xi,z)=(\mathbb{I}\xi,z)$. Assuming that $\mathbb{I}$ is a positive definite symmetric matrix, so that $\F l$ is a diffeomorphism and we might define the Hamiltonian function
	$$h(\mu,z)= \frac{1}{2}\mu^{T}\mathbb{J}\mu + \gamma z,$$ where $\mathbb{J}=\mathbb{I}^{-1}$. 
	Given a regular reduced Lagrangian $l$, if $(\xi(t),z(t))$ is a solution of Euler-Poincaré-Herglotz equations and $\mu(t)=\frac{\delta l}{\delta \xi}(\xi(t),z(t))$ we will see in this paper that the curve $t \mapsto (\mu(t),z(t))$ satisfies
	\begin{equation}\label{liepoisson}
			\dot{\mu} =\text{ad}^{*}_{\frac{\delta h}{\delta \mu}}\mu-\mu\frac{\partial h}{\partial z}, \,\,
			\dot{z}  = \langle \mu,\frac{\delta h}{\delta \mu} \rangle-h(\mu,z).
	\end{equation}

In addition, we will define a Lie-Poisson-Jacobi structure and bracket  on $\mathfrak{g}^{*}\times \R$, and derive the corresponding Lie-Poisson-Jacobi equations \eqref{liepoisson}. The Jacobi structure on the vector space $\mathfrak{g}^{*}\times \R$ is linear. This kind of structures was discussed in \cite{iglesias2000some, iglesias2001some}.

Symmetry breaking is common in several physical contexts, from classical mechanics to particle physics and complex fluids (see \cite{marsden1984semidirect}, \cite{holm1998euler}). The simplest example is the heavy top dynamics (the motion of a rigid body with a fixed point in a gravitational field), where due to the presence of gravity, we get a Lagrangian which is $\mathrm{SO}(2)$-invariant but not $\mathrm{SO}(3)$-invariant, contrary to what happens for the free rigid body. Reduction theory for symmetry breaking with applications to nematic systems was studied in \cite{gay2010reduction}. Optimal Control for systems with symmetry breaking has been studied in \cite{gay2011clebsch}. In the context of motion planning, the symmetry breaking appears naturally in the form of a navigation function \cite{bloch2017optimal}, \cite{leo2023}. More recently it has been employed to reduce necessary conditions for optimality in a collision and obstacle avoidance problem (see \cite{stratoglou2022optimal}, \cite{stratoglou2022reduction}).  Motivated by the rigid body attitude dynamics studied in \cite{shen2003asymptotic}, \cite{cho2003mathematical} for the triaxial attitude control testbed given in \cite{bernstein2001development}, we will study Euler-Poincare-Herglotz reduction for systems with symmetry breaking and we will also derive the corresponding Lie-Poisson-Jacobi equations.

This paper is structured as follows: in Section 2, we briefly recall Euler-Poincaré reduction on Lie groups and in Section 3, we recall the basic concepts involving contact structures and contact Hamiltonian vector fields. In section 4, we present the Euler-Poincaré-Herglotz reduction of invariant contact systems on Lie groups. In section 5, we define the Jacobi bracket on the extended dual of the Lie algebra $\mathfrak{g}^{*}\times \R$ and obtain the Hamiltonian counterpart of Euler-Poincaré-Herglotz equations. Finally, in Section 6, we apply the previous development to systems with a broken symmetry, through a Euler-Poincaré-Herglotz reduction on a semi-direct product.
 
\subsection*{Notation}
In order to be self-contained and use consistent notation, let us establish the functional derivative notation that we will use without reference all along the text.

Let $f$ be a smooth function on a finite-dimensional vector space $V$. The derivative of $f$ at a point $v\in V$ along the direction $u \in V$, may written in terms of geometric objects as $$Df(v)\cdot u = \left. \frac{d}{dt} \right|_{t=0}f(v+tu) = \langle df(v),(u)_{v}^{V} \rangle,$$
where $df(v)\in T_{v}^{*}V$ is the differential of the function at $V$ and $$(u)_{v}^{V}=\left. \frac{d}{dt} \right|_{t=0}(v+tu)\in T_vV$$ is the vertical lift of the vector $u$ at $v$.

The vertical lift at $v$ induces a canonical identification between the vector space $V$ and its tangent space at $v$, i.e., the map $(\cdot)_{v}^{V}:V\rightarrow T_{v}V$ is an isomorphism. This implies that the dual map $\varphi:T_{v}^{*}V\rightarrow V^{*}$ is also an isomorphism, so that $$Df(v)\cdot u = \langle \varphi(df(v)),u \rangle.$$
The unique vector $\varphi(df(v))\in V^{*}$ is called the functional derivative of $f$ at $v$. The \textit{functional derivative} of $f$ at $v$ is usually denoted by
\begin{equation*}
	\varphi(df(v))=\frac{\delta f}{\delta v}(v).
\end{equation*}
In finite dimensions, through a coordinate analysis, we see that the coordinate expression of $\displaystyle{\frac{\delta f}{\delta v}(v)}$ matches the component functions of the differential $df(v)$.

\section{Euler-Poincaré equations}

Let $G$ be a Lie group and let $\mathcal{L}_{g}:G\rightarrow G$ be the left multiplication action. Since $\mathcal{L}_{g}$ is a diffeomorphism, it naturally induces a ``trivialized chart" of $TG$, which more precisely is an identification between each tangent space $T_{g}G$ with the Lie algebra $\mathfrak{g}$ through the following map
\begin{equation}\label{LTC}
	TG \rightarrow G\times \mathfrak{g}, \quad (g,\dot{g})\mapsto (g,T_{g}\mathcal{L}_{g^{-1}}(\dot{g})).
\end{equation}
Consider the Lagrangian function $L$ as a real function $L:G\times\mathfrak{g}\to\mathbb{R}$ on $G\times\mathfrak{g}$. Then, we can express the Euler-Lagrange equations in terms of this trivialization as
\begin{equation*}
	\begin{split}
		\frac{d}{dt}\frac{\delta L}{\delta \xi} & =\text{ad}_{\xi}^{*}\frac{\delta L}{\delta \xi}+T_{e}^{*}\mathcal{L}_{g}\left( \frac{\delta L}{\delta g} \right) \\
		\dot{g} & = T_{e}\mathcal{L}_{g}(\xi).
	\end{split}
\end{equation*}  Here $\ad^{*}_{\xi}:\mathfrak{g}^{*}\rightarrow \mathfrak{g}^{*}$ is the dual of the adjoint operator $\ad_{\xi}:\mathfrak{g}\rightarrow \mathfrak{g}$ defined by $\ad_{\xi}(\eta) = [\xi, \eta]$, for $\eta\in \mathfrak{g}$.

Now, if $L: TG\rightarrow \R$ is a \textit{left-invariant} Lagrangian function, that is, its composition with the tangent lift of left-translations  $T\mathcal{L}_{g}$ leaves $L$ invariant
\begin{equation*}
	L (T_{h}\mathcal{L}_{g}(v_{h}))=L(v_{h}), \ \forall v_{h}\in T_{h} G,
\end{equation*}
then the restriction of $L$ to the tangent space at the identity $e\in G$, which we identify with the Lie algebra $\mathfrak{g}$, is called the \textit{reduced Lagrangian function} and we will denote it by $l:\mathfrak{g}\rightarrow \R$. Then all the information is contained in the reduced Lagrangian $l$ since
\begin{equation*}
	l(\eta)=L(T_{e}\mathcal{L}_{g}(\eta)), \quad \forall g \in G, \eta \in \mathfrak{g}.
\end{equation*}
It is well-known that a curve $g:I\rightarrow G$ is a solution of Euler-Lagrange equations if and only if the curve $\xi:I\rightarrow \mathfrak{g}$ determined by
\begin{equation*}
	\dot{g}=T_{e}\mathcal{L}_{g}(\dot{\xi})
\end{equation*}
satisfies the \textit{Euler-Poincaré} equations given by
\begin{equation*}
	\frac{d}{dt}\frac{\delta l}{\delta \xi}=\text{ad}_{\xi}^{*}\frac{\delta l}{\delta \xi}.
\end{equation*}
If $(y^{i})$ are local coordinates associated to a given basis $\{e_{i}\}$ of $\mathfrak{g}$, then its local expression reads
\begin{equation*}
\frac{d}{dt}\frac{\partial l}{\partial y^{i}}=C_{j i}^{k}y^{j}\frac{\partial l}{\partial y^{k}},
\end{equation*}
where $C_{i j}^{k}$ are the structure constants of the Lie algebra given by
\begin{equation*}
	[e_{i},e_{j}]=C_{i j}^{k} e_{k}
\end{equation*} (see \cite{Bloch, marsden2013introduction} for instance).

\section{Contact manifolds and contact dynamics}

In this section, we will recall the main definition of a \textbf{contact manifold} and the corresponding Hamiltonian vector fields (see~\cite{Arnold1978, marle, deLeon2018} for a more detailed overview).

A contact manifold $(M,\eta)$ is an $(2n+1)$-dimensional manifold equipped with a contact form $\eta$, i.e., $\eta$ is a $1$-form on $M$ such that $\eta \wedge (d \eta)^n$ is a volume form. The Reeb vector field $\Reeb\in \mathfrak{X}(M)$ is the unique vector field that satisfies:
\begin{equation}
	\contr{\Reeb} d \eta = 0, \quad \eta(\Reeb)=1.
\end{equation}

On a contact manifold $(M,\eta)$, we  define the following isomorphism of vector bundles:
\begin{equation}
	\begin{aligned}
		\flat: \quad&TM &\longrightarrow& T^* M,\\
	  	        & \quad v &\longmapsto& \contr{v}d \eta + \eta(v)\eta.
	\end{aligned}
\end{equation}
Notice that $\flat(\Reeb) = \eta$.

There is a Darboux theorem for contact manifolds. In a neighborhood of each point in $M$ one can find local coordinates $(q^i, p_i, z)$ such that
\begin{equation}
	 \eta = d z - p_i  d q^i.
\end{equation}
In these coordinates, we have
\begin{equation}
	\Reeb = \frac{\partial}{\partial z}.
\end{equation}

The canonical example of contact manifold is $T^*Q\times \R$. Here, the contact form is given by 
\begin{equation}\label{eq:cotangent_contact_structure}
	\eta_Q = d z - \theta_Q = d z - p_i d q^i,
\end{equation}
where $\theta_Q$ is the pullback of the tautological $1$-form of $T^*Q$, $(q^i,p_i)$ are bundle coordinates on $T^*Q$ and $z$ is the $\R$-coordinate.

Given a smooth function $f:M\rightarrow\R$, its Hamiltonian vector field $X_f$ is given by
\begin{equation}
	\flat(X_f) =  d f - (f + \Reeb(f)) \eta,
\end{equation} or, equivalently, $$i_{X_f}d\eta=df-R(f)\eta,\quad i_{X_f}\eta=-f.$$

We call the triple $(M, \eta, H)$ a contact Hamiltonian system, where $(M,\eta)$ is a contact manifold and $H:M \rightarrow \R$ is the Hamiltonian function. In contrast to their symplectic counterpart, contact Hamiltonian vector fields do not preserve the Hamiltonian fuction. In fact,
\begin{equation}
	X_H(H) = -\Reeb(H) H.
\end{equation}

\subsection{Contact Lagrangian systems}\label{contact:lagrangian}

Now we recall the Lagrangian picture of contact systems (see ~\cite{deLeon2018,deLeon2019} for a more comprehensive description). Indeed, under a suitable choice of contact structure on the extended phase space $TQ\times\R$, the contact Hamiltonian equations with respect to a Lagrangian function on this space are precisely Euler-Poincaré equations.

Let $Q$ be an $n$-dimensional configuration manifold and consider the extended phase space $TQ \times \R$ and we call {contact Lagrangian function}  to any smooth function of the type 
 $L:TQ\times \R \to \R$. In this paper, we will assume that the Lagrangian is regular, that is, the Hessian matrix with respect to the velocities $(W_{ij})$ is regular where
 \begin{equation}\label{eq:hessian}
     W_{ij} = \frac{\partial^2 L}{\partial \dot{q}^i \partial \dot{q}^j },
 \end{equation}
 and $(q^i, \dot{q}^i,z)$ are bundle coordinates for $TQ \times \R$. Equivalently, $L$ is regular if and only if the one-form
\begin{equation}
    \eta_L = d z - \theta_L
\end{equation}
is a contact form. Here,
\begin{equation*}
    \theta_L = S^* (d L) = \frac{\partial L}{ \partial \dot q^i} d {q}^i,
\end{equation*}
 where $S$ is the canonical vertical endomorphism $S:TTQ \to TTQ$ extended to $TQ \times \R$, that is, in local $TQ \times \R$ bundle coordinates,
\begin{equation}\label{eq:canonical_endomorphism}
    S = d q^i \otimes \frac{\partial}{\partial \dot{q}^i}.
\end{equation}

The energy of the system is defined by
\begin{equation}
    E_L = \Delta(L) - L = \dot{q}^i \frac{\partial L}{\partial \dot{q}^i} - L,
\end{equation}
where $\Delta$ is the Liouville vector field on $TQ$ extended to $TQ\times \R$ in the natural way.

The Reeb vector field of $\eta_L$, which we will denoted by $\Reeb_L$, is locally given by
\begin{equation}\label{eq:flat_iso}
    \Reeb_L = \frac{\partial}{\partial z} - 
    (W^{ij}) \frac{\partial^2 L}{\partial \dot{q}^i \partial z} \frac{\partial}{\partial \dot{q}^j},
\end{equation}
where $(W^{ij})$ is the inverse of the Hessian matrix with respect to the velocities $(W_{ij})$.

The Hamiltonian vector field of the energy $E_L$ will be denoted $\xi_L = X_{E_L}$, which is a second order differential equation (SODE) (that is, $S(\xi_L) = \Delta$) and its solutions are just the ones of the Herglotz equations for $L$ (see~\cite{deLeon2018,deLeon2019}):
\begin{equation}\label{Herglotz:standard}
			\begin{split}
				\frac{d}{dt}\frac{\partial L}{\partial \dot{q}^{i}} & =\frac{\partial L}{\partial q^{i}}+\frac{\partial L}{\partial \dot{q}^{i}}\frac{\partial L}{\partial z} \\
				\dot{z} & = L.
			\end{split}
		\end{equation}

Now, we will recall the definition of the \emph{Legendre transformation} for contact Lagrangian systems. Given the vector bundle $TQ\times \R \to Q \times \R$, one can consider the fiber derivative $\F L$ of $L:TQ \times \R \to \R$, which has the following coordinate expression in natural coordinates:
\begin{equation}\label{Legendre:transform}
	\begin{aligned}
		\F L:TQ \times \R &\to T^*Q \times \R\\
		(q^i,\dot{q}^i,z) &\mapsto (q^i, \frac{\partial L}{\partial \dot{q}^{i}},z).
	\end{aligned}
\end{equation}
If the Lagrangian is regular, then $\F L$ is a local diffeomorphism and one can show that $\eta_L$ is simply $(\F L)^{*}\eta_{Q}$. 

In the case that $\F L$ is a global contactomorphism, then we say that $L$ is \emph{hyperregular}. In this situation, we can define a Hamiltonian function $H:T^*Q\times \R \to \R$ such that $E_L = H \circ \F L$. So,  the Lagrangian and Hamiltonian dynamics are $\F L$-related, that is, $(\F L)_{*} (\xi_L) = X_{H}\circ\mathbb{F}L$ (see \cite{deLeon2018,deLeon2019}).

\section{Euler-Poincaré-Herglotz equations for contact reduced Lagrangian systems}

In this section, we will prove that a similar result to that in Section 2 holds in the contact situation where we consider the extended space $TG\times \R$ as our initial ambient phase space and we assume that the Lagrangian function is $G$-invariant. Let $L:TG\times \R \rightarrow \R$ be a contact Lagrangian function. Again, the fact that the left translation $\mathcal{L}_{g}:G\to G$ is a diffeomorphism allows us to consider a ``left trivialized chart" of $TG\times \R$, that is an identification of the product $T_{g}G\times \R$ with the vector space $\mathfrak{g}\times \R$ given precisely by the map
\begin{equation}\label{LTC:extended}
TG\times \R \rightarrow G\times \mathfrak{g}\times\mathbb{R}, \quad (g,\dot{g},z)\mapsto (g,T_{g}\mathcal{L}_{g^{-1}}(\dot{g}),z).
\end{equation}
Thus, if we introduce coordinates $(g,\xi,z)$ on $TG\times \R$ given by \eqref{LTC:extended}, we have the following:

\begin{theorem}\label{left:trivialized:EL:theorem}
	Let $G$ be a Lie group and $L:TG\times \R\rightarrow \R$ a contact Lagrangian function. If we introduce the coordinates $(g,\xi,z)$ on $TG\times \R$ given by \eqref{LTC:extended}, then the curves $\xi$ and $z$ satisfy the \textit{Herglotz equations on Lie groups}
		\begin{equation}\label{Herglotz:LG}
			\begin{split}
				\frac{d}{dt}\frac{\delta L}{\delta \xi} & =\emph{ad}_{\xi}^{*}\frac{\delta L}{\delta \xi}+T_{e}^{*}\mathcal{L}_{g}\left( \frac{\delta L}{\delta g} \right)+\frac{\delta L}{\delta \xi}\frac{\partial L}{\partial z} \\
				\dot{g} & = T_{e}\mathcal{L}_{g}\left( \xi \right) \\
				\dot{z} & = L.
			\end{split}
		\end{equation}
\end{theorem}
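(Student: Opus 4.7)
The plan is to derive \eqref{Herglotz:LG} by rewriting the standard Herglotz equations \eqref{Herglotz:standard} in the left-trivialized coordinates $(g,\xi,z)$ of \eqref{LTC:extended}. The $z$-direction is untouched by the trivialization on $TG$, so the auxiliary equation $\dot z = L$ transfers verbatim, and the reconstruction identity $\dot g = T_e\mathcal{L}_g(\xi)$ is merely the definition of $\xi$. Thus the real content is the first line of \eqref{Herglotz:LG}, and the argument parallels the classical derivation of the left-trivialized Euler-Lagrange (equivalently, Euler-Poincaré) equations on a Lie group, with an additional dissipative term $\frac{\delta L}{\delta \xi}\frac{\partial L}{\partial z}$ carried along from the Herglotz formalism.

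Concretely, I would fix a local chart $(g^\alpha)$ on $G$ and a basis $\{e_i\}$ of $\mathfrak g$, and let $\omega^i_\alpha(g)$ denote the components of the left Maurer-Cartan form, so that $\xi^i = \omega^i_\alpha(g)\dot g^\alpha$. Writing the Lagrangian in both charts and applying the chain rule yields
\[
\frac{\partial L}{\partial \dot g^\alpha} = \omega^i_\alpha\,\frac{\partial L}{\partial \xi^i}, \qquad \frac{\partial L}{\partial g^\alpha}\bigg|_{\dot g} = \frac{\partial L}{\partial g^\alpha}\bigg|_{\xi} + \frac{\partial \omega^i_\beta}{\partial g^\alpha}\dot g^\beta\,\frac{\partial L}{\partial \xi^i},
\]
while $\partial L/\partial z$ is unchanged. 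Differentiating the first identity in $t$, substituting into \eqref{Herglotz:standard}, and then contracting with the dual left-invariant frame to push the resulting equation back to $T^*_e G \cong \mathfrak g^*$, one obtains (i) the time derivative $\frac{d}{dt}\frac{\delta L}{\delta \xi}$, (ii) the covector $T^*_e\mathcal L_g(\frac{\delta L}{\delta g})$ from the first piece of $\partial L/\partial g^\alpha$, (iii) the Maurer-Cartan cross-terms $\frac{\partial \omega^i_\beta}{\partial g^\alpha}\dot g^\beta$, which after antisymmetrization via $d\omega^i = -\tfrac12 C^i_{jk}\omega^j\wedge\omega^k$ collapse to the structure constants $C^k_{ji}\xi^j \frac{\delta L}{\delta \xi^k}$, i.e.\ $\mathrm{ad}^*_\xi \frac{\delta L}{\delta \xi}$, and (iv) the dissipative factor $\frac{\delta L}{\delta \xi}\frac{\partial L}{\partial z}$ transferred unchanged.

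The main technical obstacle, exactly as in the classical Euler-Poincaré derivation, is the combinatorics of step (iii): making sure the cross-terms coming from the time derivative of the Maurer-Cartan coefficients assemble precisely into $\mathrm{ad}^*_\xi(\delta L/\delta \xi)$, with the correct sign and index placement, rather than generating spurious contributions. The contact deformation contributes no new difficulty of this kind, because $z$ is inert under the trivialization and $\partial L/\partial z$ multiplies the already-identified covector $\delta L/\delta \xi$.

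A more intrinsic alternative, which would also serve here, is to invoke the Herglotz variational principle directly on $G\times\mathfrak g\times\R$: setting $\eta = T_g\mathcal L_{g^{-1}}(\delta g)$, the constrained variations of $\xi$ are $\delta\xi = \dot\eta + [\xi,\eta]$, and stationarity of the action defined by the auxiliary ODE $\dot z = L(g,\xi,z)$ yields \eqref{Herglotz:LG} after one integration by parts in which the standard integrating factor $\exp\!\int (\partial L/\partial z)\,dt$ produces exactly the dissipative term. This route has the advantage of avoiding explicit use of a local frame, at the cost of carefully handling the endpoint terms arising from the $z$-dependence of the action.
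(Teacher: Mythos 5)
Your primary route is correct but genuinely different from the paper's. The paper does not transform the already-established Herglotz equations \eqref{Herglotz:standard}; instead it runs the Herglotz variational principle from scratch on $G\times\mathfrak g\times\R$: it takes constrained variations $\delta\xi=\dot\eta+[\xi,\eta]$ with $\eta=T_g\mathcal L_{g^{-1}}(\delta g)$, observes that the first variation of the action equals $\delta z(h)$, solves the inhomogeneous linear ODE satisfied by $\delta z$ to produce the integrating factor $\sigma(t)=\exp\bigl(-\int_0^t\partial\tilde L/\partial z\,ds\bigr)$, integrates by parts, and applies the fundamental lemma --- exactly the ``more intrinsic alternative'' you sketch in your last paragraph. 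Your main argument instead treats the result as a change of coordinates applied to \eqref{Herglotz:standard}, which the paper has already recorded (with references) as the equations of motion on $TQ\times\R$ for any $Q$. The mechanism you identify is the right one: differentiating $\partial L/\partial\dot g^\alpha=\omega^i_\alpha\,\partial\tilde L/\partial\xi^i$ in $t$ and subtracting the $\partial\omega^i_\beta/\partial g^\alpha\,\dot g^\beta$ term from $\partial L/\partial g^\alpha$ leaves the antisymmetrized combination $\bigl(\partial_\alpha\omega^i_\beta-\partial_\beta\omega^i_\alpha\bigr)\dot g^\beta$, which the Maurer--Cartan equation converts into $-C^i_{jk}\omega^j_\alpha\xi^k$, and contraction with the dual left-invariant frame then yields $\mathrm{ad}^*_\xi(\delta L/\delta\xi)$ with the sign matching the paper's convention $\frac{d}{dt}\frac{\partial l}{\partial y^i}=C^k_{ji}y^j\frac{\partial l}{\partial y^k}$; the dissipative term rides along because $\partial L/\partial z$ is a scalar multiplying the covector already identified as $\delta L/\delta\xi$. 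What each approach buys: yours is shorter and avoids re-deriving the variational machinery (no integrating factor, no endpoint bookkeeping for $\delta z$), at the cost of leaning on \eqref{Herglotz:standard} as an external input and on a frame-dependent index computation; the paper's variational route is self-contained and, crucially, sets up the integrating-factor calculation and the variation formula $\delta\xi=\dot\eta+[\xi,\eta]$ that are reused verbatim in the proofs of Theorem \ref{4points} and Theorem \ref{th5.1}, which is why the authors take the longer path.
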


\begin{proof}
	Before starting the proof, let $\tilde{L}:G\times \mathfrak{g}\times \R \rightarrow \R$ be the left trivialized Lagrangian function using the coordinates \eqref{LTC:extended}. So that
	\begin{equation*}
		\tilde{L}(g,\xi,z)=L(g,g\xi,z),
	\end{equation*}
	where we are using the notation $g\xi:=T_{e}\mathcal{L}_{g}(\xi)$. In what follows, we will find critical values of the action along a trajectory on $TG\times \R$ of the form $(g(t),\dot{g}(t),z(t))$. But, under the left trivialized coordinates, this is equivalent to find critical values along trajectories on $G\times \mathfrak{g}\times \R$ of the form $(g(t),\xi(t),z(t))$, where
	\begin{equation*}
		\xi(t):=T_{g(t)}\mathcal{L}_{g(t)^{-1}}(\dot{g}(t)):=g(t)^{-1} \cdot \dot{g}(t).
	\end{equation*}
	
	Also, note that for any variation of a solution $g:I\rightarrow G$ of Euler-Lagrange equations denoted by $c:U\subset \R^{2}\rightarrow G$, the tangent vectors $\displaystyle{\frac{\partial c}{\partial s}}$ and $\displaystyle{\frac{\partial c}{\partial t}}$ may be transported to the Lie algebra $\mathfrak{g}$ in the following way
	\begin{equation}\label{Lie:algebra:variation}
		\begin{split}
			\eta:U & \rightarrow \mathfrak{g} \\
			(t,s) & \mapsto T_{c(t,s)}\mathcal{L}_{c(t,s)^{-1}}\left(\frac{\partial c}{\partial s}\right)
		\end{split}
		\quad
		\begin{split}
			\xi:U & \rightarrow \mathfrak{g} \\
			(t,s) & \mapsto T_{c(t,s)}\mathcal{L}_{c(t,s)^{-1}}\left(\frac{\partial c}{\partial t}\right).
		\end{split}
	\end{equation}
	Moreover, it is a standard fact (see \cite{bloch1996euler, holm1998euler}) that these maps satisfy
	\begin{equation}\label{Lie:algebra:variation:eq}
		\frac{\partial \xi}{\partial s}-\frac{\partial \eta}{\partial t}=[\xi,\eta].
	\end{equation}
	Conversely, if there are 2-parameter family of curves $\xi$ and $\eta$ on the Lie algebra satisfying \eqref{Lie:algebra:variation:eq}, there exists a smooth function $c:U\subset \R^{2}\rightarrow G$ such that $\xi$ and $\eta$ are given by 
 equations \eqref{Lie:algebra:variation}.
	
	We will often commit a slight abuse of notation with the letters $\xi$ and $\eta$, in the sense that they represent both a curve and a variation of curves in $\mathfrak{g}$. The first will be denoted by $\xi(t)$ and the second by $\xi(t,s)$, respectively $\eta(t)$ and $\eta(t,s)$. The relation between them is that
	\begin{equation}\label{xi:eta}
		\xi(t)=\xi(t,0)=g^{-1}\cdot \dot{g} \quad \text{and} \quad \eta(t)=\eta(t,0):=g^{-1}\cdot \delta g
	\end{equation}
	
	Now, find equations \eqref{Herglotz:LG} are equivalent to Herglotz equations on the contact extended space $TG\times \R$ let us show that its solutions satisfy Herglotz principle. Let the action over a curve be given by
	\begin{equation*}
		\mathcal{A}(g(\cdot),z_{0})=\int_{0}^{h}L(g(t),\dot{g}(t),z(t)) \ dt,
	\end{equation*}
	where the curve $z(t)$ satisfies the equation
	\begin{equation*}
	\dot{z}=L(g(t),\dot{g}(t),z(t)), \quad z(0)=z_{0}.
	\end{equation*}
	
	The action over a variation $c$ of the curve $g$ might be written as
	\begin{equation*}
		\mathcal{A}(c(\cdot,s),z_{0}) = \int_{0}^{h} L \left(c(t,s),\frac{\partial c}{\partial t}(t,s),z(t,s) \right) \ dt,
	\end{equation*}
	which in the coordinates \eqref{LTC:extended} is just
	\begin{equation*}
		\mathcal{A}(c(\cdot,s),z_{0}) = \int_{0}^{h} \tilde{L} \left(c(t,s),\xi(t,s),z(t,s) \right) \ dt,
	\end{equation*}
	where $\xi$ is defined as in \eqref{Lie:algebra:variation} and the variation $z(t,s)$ satisfies the equation
	\begin{equation*}
	\dot{z}(t,s)=\tilde{L}(c(t,s),\xi(t,s),z(t,s)), \quad z(0, s)=z_{0}.
	\end{equation*}
	
	
	
	Taking variations $\xi(t,s)$ and $\eta(t,s)$ as in \eqref{Lie:algebra:variation} and \eqref{Lie:algebra:variation:eq}, the first variation of the action functional gives
	\begin{equation*}
		\begin{split}
			\left. \frac{d}{ds} \right|_{s=0} \int_{0}^{h}  & \tilde{L} \left(c(t,s),\xi(t,s),z(t,s) \right)  \ dt \\
			& = \int_{0}^{h}\left[\frac{\partial \tilde{L}}{\partial g}\delta g + \frac{\partial \tilde{L}}{\partial \xi}\delta \xi + \frac{\partial \tilde{L}}{\partial z}\delta z\right] \ dt = z(h, 0).
		\end{split}
	\end{equation*}
	Observe now that the infinitesimal variation of the curve $z$, that is, $\delta z$ must satisfy the inhomogeneous linear differential equation
	\begin{equation*}
		\frac{\partial \delta z}{\partial t}(t)=\frac{\partial \tilde{L}}{\partial g}\delta g + \frac{\partial \tilde{L}}{\partial \xi}\delta \xi + \frac{\partial \tilde{L}}{\partial z}\delta z.
	\end{equation*}
	This is a first order ordinary differential equation whose solution is
	\begin{equation*}
		\delta z(\tau)=e^{\int_{0}^{\tau}\frac{\partial \tilde{L}}{\partial z}ds}\left( \int_{0}^{\tau}e^{-
			\int_{0}^{t}\frac{\partial \tilde{L}}{\partial z}ds}\left[\frac{\partial \tilde{L}}{\partial g}\delta g + \frac{\partial \tilde{L}}{\partial \xi}\delta \xi\right] \ dt \right),
	\end{equation*}
	since $\delta z(0)$ vanishes. Now, observe that we have seen above that the first variation of the action is equal to $\delta z(h)$. Hence,
	\begin{equation*}
		\left. \frac{d}{ds} \right|_{s=0}\mathcal{A}(c(\cdot,s),z_{0})= e^{\int_{0}^{h}\frac{\partial \tilde{L}}{\partial z}ds}\left( \int_{0}^{h}e^{-
			\int_{0}^{t}\frac{\partial \tilde{L}}{\partial z}ds}\left[\frac{\partial \tilde{L}}{\partial g}\delta g + \frac{\partial \tilde{L}}{\partial \xi}\delta \xi\right] \ dt \right),
	\end{equation*}
	where from \eqref{Lie:algebra:variation:eq} we have that $\delta \xi=\dot{\eta}+[\xi,\eta]$ and from \eqref{xi:eta} we have that $\delta g = g \cdot \eta$. Defining the auxiliary function
	\begin{equation*}
		\sigma(t)=e^{-
			\int_{0}^{t}\frac{\partial \tilde{L}}{\partial z}ds},
	\end{equation*}
	we obtain
	\begin{equation*}
		\left. \frac{d}{ds} \right|_{s=0}\mathcal{A}(c(\cdot,s),z_{0})= \frac{1}{\sigma(h)}\left( \int_{0}^{h}\sigma(t) \left[\frac{\partial \tilde{L}}{\partial \xi} (\dot{\eta}+\text{ad}_{\xi}\eta)+\frac{\partial \tilde{L}}{\partial g}g \cdot \eta \right] \ dt \right).
	\end{equation*}
	Integrating by parts, in order to get rid of time derivatives of $\eta$, and noting that the boundary terms vanish since $\eta$ is zero at the endpoints we conclude that the action reduces to the expression below
	\begin{equation*}
		\frac{1}{\sigma(h)}\left( \int_{0}^{h} 	\left[\sigma(t)\text{ad}_{\xi}^{*}\frac{\partial \tilde{L}}{\partial \xi}-\frac{d}{dt}\left(\sigma(t)\frac{\partial \tilde{L}}{\partial \xi}\right)+\sigma(t)T_{e}^{*}\mathcal{L}_{g}\left(\frac{\partial \tilde{L}}{\partial g}\right)\right]\eta \ dt \right).
	\end{equation*}
	Since $\eta$ is arbitrary and due to the fundamental theorem of calculus of variations, the curve $(g(t),\xi(t),z(t))$ is a critical point of the action if and only if it satisfies the equation
	\begin{equation*}
		\sigma(t)\text{ad}_{\xi}^{*}\frac{\partial \tilde{L}}{\partial \xi}-\sigma(t)\frac{d}{dt}\left(\frac{\partial \tilde{L}}{\partial \xi}\right)-\dot{\sigma}\frac{\partial \tilde{L}}{\partial \xi}+\sigma(t)T_{e}^{*}\mathcal{L}_{g}\left(\frac{\partial \tilde{L}}{\partial g}\right)=0,
	\end{equation*}
	which finishes the proof since this is equivalent to
	\begin{equation*}
		\sigma(t)\left[\text{ad}_{\xi}^{*}\frac{\partial \tilde{L}}{\partial \xi}-\frac{d}{dt}\left(\frac{\partial \tilde{L}}{\partial \xi}\right)+\frac{\partial \tilde{L}}{\partial z}\frac{\partial \tilde{L}}{\partial \xi}+T_{e}^{*}\mathcal{L}_{g}\left(\frac{\partial \tilde{L}}{\partial g}\right)\right]=0
	\end{equation*}
	and the auxiliary function $\sigma$ is nowhere zero. The last two equations in \eqref{Herglotz:LG} follow by construction.
\end{proof}

Next, let us define first the \textit{lifted action of $G$ on $TG\times \R$} whose translation by $g\in G$ is the map $\hat{\mathcal{L}}_{g}:TG\times \R \rightarrow TG\times \R$ given by
\begin{equation}\label{lhat}
\hat{\mathcal{L}}_{g}(v_{h},z)=(T_{h}\mathcal{L}_{g}(v_{h}),z).
\end{equation}

Suppose that $L:TG\times \R \rightarrow \R$ is a \textit{left invariant} contact Lagrangian function, thus satisfying $L\circ \hat{\mathcal{L}}_{g}=L$. We define the \textit{reduced contact Lagrangian} to be the function $l:\mathfrak{g}\times \R \rightarrow \R$ which is the restriction of the contact Lagrangian function $L$ to the space $T_{e}G \times \R$.

Observe that in this case
\begin{equation*}
	l(\xi,z)=L(e,\xi,z)=L(g,g\cdot \xi,z)
\end{equation*}
by the left-invariant property. The results of the previous theorem are further simplified using this additional assumption and we obtain in consequence the \textit{Euler-Poincaré-Herglotz} equations on the product space $\mathfrak{g}\times \R$.

\begin{theorem}\label{4points}
	Let $G$ be a Lie group and $L:TG\times \R\rightarrow \R$ a left-invariant contact Lagrangian function. If $l:\mathfrak{g}\times \R \rightarrow \R$ is the corresponding reduced Lagrangian, $g:I\rightarrow G$ is a curve on the Lie group and $\xi:I\rightarrow \mathfrak{g}$ a  curve on the Lie algebra, then the following are equivalent:
	\begin{enumerate}
		\item The curves $g$ and $z:I\rightarrow\R$ satisfy Herglotz equations for $L$;
		\item The Herglotz principle
		\begin{equation*}
		\delta \int_{0}^{h} L(g(t),\dot{g}(t),z(t)) \ dt=0, \quad \dot{z}=L(g(t),\dot{g}(t),z(t))
		\end{equation*}
		holds for variations of $g$ with fixed endpoints;
		\item The curves $\xi(t)=T_{g(t)}\mathcal{L}_{g^{-1}(t)}(\dot{g}(t))$ and $z$ satisfy the \textit{Euler-Poincaré-Herglotz} equations
		\begin{equation}\label{Herglotz:Poincare}
		\frac{d}{dt}\frac{\delta l}{\delta \xi}=\emph{ad}_{\xi}^{*}\frac{\delta l}{\delta \xi}+\frac{\delta l}{\delta \xi}\frac{\partial l}{\partial z} \quad \text{and} \quad \dot{z}=l.
		\end{equation}
		\item The reduced Herglotz variational principle
		\begin{equation}\label{reduced:principle}
		\delta \int_{0}^{h} l(\xi(t),z(t)) \ dt=0, \quad \dot{z}=l(\xi(t),z(t))
		\end{equation}
		holds using variations of the form $\delta \xi = \dot{\eta}+[\xi,\eta]$, where $\eta$ is a curve on $\mathfrak{g}$ vanishing at the endpoints.
	\end{enumerate}
\end{theorem}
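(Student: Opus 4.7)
The cleanest structure is to close the cycle $(1)\Rightarrow(2)\Rightarrow(4)\Rightarrow(3)\Rightarrow(1)$, but most of the work is already packaged in Theorem \ref{left:trivialized:EL:theorem} and two almost tautological reductions coming from left-invariance. My plan is to prove separately the three equivalences $(1)\Leftrightarrow(3)$, $(2)\Leftrightarrow(4)$, and $(1)\Leftrightarrow(2)$, which together suffice.

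For $(1)\Leftrightarrow(3)$, I would simply invoke Theorem \ref{left:trivialized:EL:theorem}. Let $\tilde L(g,\xi,z)$ denote the left-trivialization of $L$. Left-invariance of $L$ means exactly that $\tilde L(g,\xi,z)=L(e,\xi,z)=l(\xi,z)$ is independent of $g$, so $\delta\tilde L/\delta g=0$ and the cotangent term $T_{e}^{*}\mathcal L_g(\delta L/\delta g)$ in \eqref{Herglotz:LG} drops out. The equation $\dot z=L$ becomes $\dot z=l$ and the first equation reduces to the Euler-Poincaré-Herglotz equations \eqref{Herglotz:Poincare}. Conversely, starting from $(3)$ together with $\dot g=T_e\mathcal L_g(\xi)$, we recover system \eqref{Herglotz:LG} verbatim, and Theorem \ref{left:trivialized:EL:theorem} gives back $(1)$.

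For $(2)\Leftrightarrow(4)$, I would use left-invariance to identify the two action functionals. If $g:I\to G$ is a curve and $\xi(t)=g(t)^{-1}\cdot\dot g(t)$, then
\begin{equation*}
L(g(t),\dot g(t),z(t))=\tilde L(g(t),\xi(t),z(t))=l(\xi(t),z(t)),
\end{equation*}
so the two action integrals and the two constraint ODEs $\dot z=L$ and $\dot z=l$ coincide on the solutions under consideration. The only nontrivial point is matching up the admissible variations. Using \eqref{Lie:algebra:variation}–\eqref{Lie:algebra:variation:eq}, a variation $c(t,s)$ of $g(t)$ with fixed endpoints corresponds to a pair $(\xi(t,s),\eta(t,s))$ of Lie algebra curves satisfying $\partial_s\xi-\partial_t\eta=[\xi,\eta]$, with $\eta$ vanishing at $t=0,h$; conversely, any such $\eta$ with $\eta(0)=\eta(h)=0$ integrates to a variation of $g$ with fixed endpoints. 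This is precisely the correspondence invoked in the proof of Theorem \ref{left:trivialized:EL:theorem}, and so stationarity of the action under variations of $g$ with fixed endpoints is identical to stationarity of $\int_0^h l(\xi,z)\,dt$ under variations $\delta\xi=\dot\eta+[\xi,\eta]$ with $\eta$ vanishing at the endpoints.

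For $(1)\Leftrightarrow(2)$, one recognizes $(2)$ as nothing but the Herglotz variational principle for the contact Lagrangian $L$ on $TG\times\R$, whose equivalence with the Herglotz equations is recalled in Section 3.1; alternatively, I would observe that the entire computation in the proof of Theorem \ref{left:trivialized:EL:theorem} runs from the Herglotz principle in trivialized coordinates directly to equations \eqref{Herglotz:LG}, so it also establishes this equivalence in passing. The step I expect to be subtlest is the bookkeeping of $\delta z$: because $z$ is not freely varied but is forced by the constraint $\dot z=L$, its first variation $\delta z$ solves an inhomogeneous linear ODE and picks up the integrating factor $\sigma(t)=\exp(-\int_0^t\partial\tilde L/\partial z\,ds)$; the fact that $\sigma$ is nowhere zero is what allows one to conclude the Euler-Poincaré-Herglotz equation from the vanishing of the first variation, exactly as at the end of the proof of Theorem \ref{left:trivialized:EL:theorem}. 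Once this is observed, the four conditions are equivalent.
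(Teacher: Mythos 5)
Your proposal is correct and follows essentially the same route as the paper: both rest on the general contact-manifold equivalence of (1) and (2), the correspondence of variations from the proof of Theorem \ref{left:trivialized:EL:theorem} for (2)$\Leftrightarrow$(4), and the $\sigma$-weighted first-variation computation (with the $T_e^*\mathcal{L}_g(\delta L/\delta g)$ term killed by left-invariance) for the remaining link. The only cosmetic difference is that you close the equivalence graph through the edge (1)$\Leftrightarrow$(3) by specializing Theorem \ref{left:trivialized:EL:theorem}, whereas the paper proves (3)$\Leftrightarrow$(4) directly --- the underlying calculation is identical.
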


\begin{proof}
	The equivalence of items $1.$ and $2.$ is general for all contact manifolds, so in particular for $TG\times \R$ (see \cite{anahory2021geometry}).
	
	Next, we prove the equivalence of both variational principles, that is, items $2.$ and $4.$ Observe that $l:\mathfrak{g}\times\mathbb{R}\to\mathbb{R}$ determines uniquely the function $L:TG\times\mathbb{R}\to\mathbb{R}$ by left-translations and viceversa. Thus one only needs to show that variations $(\delta g,\delta z)\in TG\times\mathbb{R}$ with $\delta g$ having fixed endpoints induces and are induced by variations $(\delta\xi,\delta z)$ with $\delta\xi$ of the form $\delta\xi=\dot{\eta}+[\xi,\eta]$, where $\eta(t)$ vanishes at end points. But, this is the content in the proof of Theorem \ref{left:trivialized:EL:theorem}.

 To conclude, we show the equivalence between $3$. and $4$. Indeed by using the definition of the auxiliary function $\sigma(t)$, the calculations in the proof of Theorem \ref{left:trivialized:EL:theorem} and integration by parts, we deduce that the reduced Herglotz variational principle holds if and only if
\begin{equation}
\sigma(t)\left[\text{ad}_{\xi}^{*}\frac{\partial l}{\partial \xi}-\frac{d}{dt}\left(\frac{\partial l}{\partial \xi}\right)+\frac{\partial l}{\partial z}\frac{\partial l}{\partial \xi}\right] = 0.
\end{equation}
So, since the auxiliary function $\sigma$ is nowhere zero, the result follows.
\end{proof}

\begin{remark}
	If we have determined a trajectory $(\xi(t),z(t))$ in $\mathfrak{g}\times \R$ solving Euler-Poincaré-Herglotz equations, we may obtain the corresponding trajectory on the original phase space $TG\times \R$ through the \textit{reconstruction} procedure, which amounts to solving the reconstruction equation
	\begin{equation*}
		\dot{g}=g \cdot \xi.
	\end{equation*}
\end{remark}

\begin{example}\label{so3:example}
	Consider now the matrix Lie group $SO(3)$, and a Lagrangian function $L:TSO(3)\times \R \rightarrow \R$ of the form
	\begin{equation*}
		L(R,\dot{R},z)=\frac{1}{2}\langle\langle \dot{R},\dot{R} \rangle\rangle  - \gamma z,
	\end{equation*}
with
	$$\langle\langle \dot{R},\dot{R} \rangle\rangle = \int_{B}\rho(X)\|\dot{R}X\|^{2} \ d^{3}X,$$
	where $B$ is the subset of $\R^{3}$ occupied by some rigid body. In this case, the left multiplication map is defined by $\mathcal{L}_{R_{1}}:SO(3)\rightarrow SO(3)$
	\begin{equation*}
		\mathcal{L}_{R_{1}}(R)=R_{1}R
	\end{equation*}
	and so its tangent map is just
	\begin{equation*}
		T_{R}\mathcal{L}_{R_{1}}(Y)=R_{1}Y, \quad Y\in T_{R}SO(3).
	\end{equation*}
	Moreover, the Lie algebra of $SO(3)$, denoted by $\mathfrak{so}(3)$ is composed by $3\times 3$ skew-symmetric matrices and the adjoint map is given by matrix commutator, i.e.,
	$$\text{ad}_{\xi} \eta =\xi \eta-\eta \xi, \quad \xi,\eta \in \mathfrak{so}(3).$$
	
	Now, the expression of $L$ on left trivialized coordinates is well-known to be
	$$L(R,\xi,z)=\frac{1}{2}\langle \xi,\xi \rangle  - \gamma z,$$
	where the inner product on $\mathfrak{so}(3)$ is given by
	$$\langle \xi_{1},\xi_{2} \rangle = \frac{1}{2}\text{tr} \ (\xi_{1}^{T}\mathbb{I}\xi_{2}),$$
	with $\mathbb{I}$ the inertia tensor. Many of the computation are easily carried using the hat map $\hat{(\cdot)}:\R^{3}\rightarrow \mathfrak{so}(3)$ defined by
	\begin{equation*}
	(\omega_{1},\omega_{2},\omega_{3}) \mapsto \left(\begin{matrix}
	0 & \omega_{1} & \omega_{2} \\
	-\omega_{1} & 0 & \omega_{3} \\
	-\omega_{2} & -\omega_{3} & 0
	\end{matrix}\right)
	\end{equation*}
	which is a Lie algebra isomorphism. Under these definitions the Lagrangian function is written as
	$$L(R,\xi,z)=\frac{1}{2}\xi^{T}\mathbb{I}\xi - \gamma z$$
	and the equations \eqref{Herglotz:LG} give
	\begin{equation*}
		\begin{split}
			\mathbb{I}\dot{\xi} & =-\xi \times \mathbb{I}\xi - \gamma \mathbb{I} \xi \\
			\dot{\xi} & = R\xi \\
			\dot{z} & = L.
		\end{split}
	\end{equation*}	\hfill$\diamond$
\end{example}

\section{Lie-Poisson-Jacobi reduction}

In this section, we will introduce a Jacobi structure on the space $\mathfrak{g}^{*}\times \R$ and we will see that for a regular $G$-invariant Lagrangian function the reduced dynamics is Hamiltonian with respect to this Jacobi structure. 

\subsection{Contact dynamics on a Lie group}

Now, assume that the configuration space $Q$ is a Lie group $G$ and that the regular Lagrangian function $L:TG\times\mathbb{R}\to\mathbb{R}$ is $G$-invariant. Then, similarly to what happens between the contact manifold $TG\times \R$ and the canonical contact manifold $T^{*}G\times \R$, we may also identify $\mathfrak{g}\times \R$ and $\mathfrak{g}^{*}\times \R$ through the \textit{reduced Legendre transformation} $\F l:\mathfrak{g}\times \R \rightarrow \mathfrak{g}^{*}\times \R$ given by
\begin{equation*}
	\F l(\xi,z)=\left( \frac{\delta l}{\delta \xi},z \right).
\end{equation*}
Note that,
\begin{equation*}
	\F L (e, \xi,z)=(e,\F l(\xi,z)),
\end{equation*}
that is, the reduced Legendre transform is essentially the restriction of the Legendre transform to $\mathfrak{g}\times \R$. Thus, the requiring $L$ to be regular is sufficient in order to make $\F l$ a local diffeomorphism.

\begin{proposition}
	If $L$ is $G-$invariant, the Legendre transform is $G$-equivariant, that is,
	\begin{equation*}
		\F L(g\cdot v, z)=g \cdot \F L(v, z), \quad (v,z) \in TG\times \R.
	\end{equation*}
\end{proposition}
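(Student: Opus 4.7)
The plan is to prove the equivariance identity directly from the fiber-derivative definition of $\F L$, using only the linearity of $T\mathcal{L}_g$ on fibers together with the $G$-invariance of $L$. The $G$-action on $T^{*}G\times \R$ is understood to be the natural lifted action
\begin{equation*}
	g\cdot (\alpha_h,z) = \bigl((T_{gh}\mathcal{L}_{g^{-1}})^{*}\alpha_h,\,z\bigr)\in T^{*}_{gh}G\times\R,
\end{equation*}
i.e., the cotangent lift of left translation, leaving the $\R$-coordinate untouched. The corresponding action on $TG\times\R$ is the lifted action \eqref{lhat}.

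First I would fix $(v_h,z)\in T_hG\times\R$ and an arbitrary test vector $w_{gh}\in T_{gh}G$, then evaluate $\F L(g\cdot v_h,z)$ on $w_{gh}$ via the defining formula
\begin{equation*}
	\F L(T_h\mathcal{L}_g(v_h),z)(w_{gh}) = \left.\frac{d}{dt}\right|_{t=0} L\bigl(T_h\mathcal{L}_g(v_h)+tw_{gh},\,z\bigr).
\end{equation*}
The key move is to set $u_h:=T_{gh}\mathcal{L}_{g^{-1}}(w_{gh})\in T_hG$, so that linearity of $T_h\mathcal{L}_g$ on the fiber $T_hG$ yields
\begin{equation*}
	T_h\mathcal{L}_g(v_h)+tw_{gh}=T_h\mathcal{L}_g(v_h+tu_h).
\end{equation*}
Thus the argument of $L$ is in the $\hat{\mathcal{L}}_g$-orbit of $(v_h+tu_h,z)$.

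Next I would invoke the $G$-invariance $L\circ \hat{\mathcal{L}}_g = L$ to strip off $T\mathcal{L}_g$ inside the derivative, obtaining
\begin{equation*}
	\left.\frac{d}{dt}\right|_{t=0} L(v_h+tu_h,z) = \F L(v_h,z)(u_h) = \F L(v_h,z)\bigl(T_{gh}\mathcal{L}_{g^{-1}}(w_{gh})\bigr).
\end{equation*}
Since $w_{gh}$ is arbitrary, this is precisely the statement
\begin{equation*}
	\F L(g\cdot v_h,z) = \bigl((T_{gh}\mathcal{L}_{g^{-1}})^{*}\F L(v_h,z),\,z\bigr) = g\cdot \F L(v_h,z),
\end{equation*}
which is the claimed equivariance. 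The $\R$-component is preserved trivially because $\F L$ is the identity in $z$ and so is the lifted action.

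There is no real obstacle here; the computation is essentially forced by the definitions, and the only subtlety is the bookkeeping of which fiber each vector lives in and the convention that the cotangent-lifted action uses $\mathcal{L}_{g^{-1}}$ in the pullback so that one gets a left action on $T^{*}G$. If desired, one can also give a quick coordinate verification in the left-trivialized chart \eqref{LTC:extended}: under that identification $L$ reduces to $\tilde{L}(g,\xi,z)$, $G$-invariance means $\tilde{L}$ is independent of $g$, and $\F L$ takes the form $(g,\xi,z)\mapsto (g,\partial\tilde{L}/\partial\xi,z)$, from which equivariance is immediate.
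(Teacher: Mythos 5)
Your proof is correct. Note that the paper itself states this proposition without any proof, so there is nothing to compare against; your direct computation from the fiber-derivative definition --- using fiberwise linearity of $T_h\mathcal{L}_g$ to write $T_h\mathcal{L}_g(v_h)+tw_{gh}=T_h\mathcal{L}_g(v_h+tu_h)$ and then stripping off the translation via $L\circ\hat{\mathcal{L}}_g=L$ --- is exactly the standard argument one would supply, and the sign/convention bookkeeping for the cotangent-lifted action matches the paper's definition $\phi_g(\alpha_h)=T_{gh}^{*}\mathcal{L}_{g^{-1}}(\alpha_h)$.
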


Suppose that the Legendre transformation is a diffeomorphism and consider the function $h:\mathfrak{g}^{*}\times \R \rightarrow \R$ given by
\begin{equation*}
	h(\mu,z)=\langle \mu, \xi \rangle-l(\xi,z),
\end{equation*}
where $\displaystyle{\mu=\frac{\delta l}{\delta \xi}}$. Then it is easy to see that
\begin{equation*}
	\frac{\delta h}{\delta \mu}=\xi.
\end{equation*}

By inserting these equalities in Euler-Poincar\'e-Herglotz equations we get the following result:

\begin{proposition}
Given a regular reduced Lagrangian function $l$, if $(\xi(t),z(t))$ is a solution of Euler-Poincaré-Herglotz equations and $\mu(t)=\frac{\delta l}{\delta \xi}(\xi(t),z(t))$ then the curve $t \mapsto (\mu(t),z(t))$ satisfies
	\begin{equation*}
		\begin{split}
			\dot{\mu} & =\text{ad}^{*}_{\frac{\delta h}{\delta \mu}}\mu-\mu\frac{\partial h}{\partial z} \\
			\dot{z} & = \langle \mu,\frac{\delta h}{\delta \mu} \rangle-h(\mu,z).
		\end{split}
	\end{equation*}
\end{proposition}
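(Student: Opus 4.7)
The plan is to deduce these equations as a direct algebraic consequence of the Euler--Poincaré--Herglotz equations, using only the defining identity of the Legendre transform. First I would record, by differentiating $h(\mu,z)=\langle\mu,\xi\rangle-l(\xi,z)$ with $\xi$ viewed implicitly as a function of $(\mu,z)$ through $\mu=\delta l/\delta\xi$, the two Legendre-transform identities
\begin{equation*}
\frac{\delta h}{\delta\mu}=\xi,\qquad \frac{\partial h}{\partial z}=-\frac{\partial l}{\partial z}.
\end{equation*}
In both cases, the terms coming from differentiating $\xi$ cancel: the $\langle\mu,\cdot\rangle$ contribution is exactly compensated by $-(\delta l/\delta\xi)(\cdot)$ since $\mu=\delta l/\delta\xi$. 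The hypothesis that $l$ is regular is what makes the implicit dependence $\xi=\xi(\mu,z)$ well defined in the first place, via the local diffeomorphism $\F l$.

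Next, I would substitute these two identities into the Euler--Poincaré--Herglotz system from Theorem \ref{4points}. The first equation
\begin{equation*}
\dot\mu=\frac{d}{dt}\frac{\delta l}{\delta\xi}=\mathrm{ad}^{*}_{\xi}\frac{\delta l}{\delta\xi}+\frac{\delta l}{\delta\xi}\frac{\partial l}{\partial z}
\end{equation*}
becomes $\dot\mu=\mathrm{ad}^{*}_{\delta h/\delta\mu}\mu-\mu\,\partial h/\partial z$ after replacing $\xi$ by $\delta h/\delta\mu$, $\delta l/\delta\xi$ by $\mu$, and $\partial l/\partial z$ by $-\partial h/\partial z$. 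For the $z$-equation, I would use $\dot z=l(\xi,z)$ together with the very definition of $h$, rewritten as $l(\xi,z)=\langle\mu,\xi\rangle-h(\mu,z)$, and replace $\xi$ by $\delta h/\delta\mu$ to obtain $\dot z=\langle\mu,\delta h/\delta\mu\rangle-h(\mu,z)$.

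I do not anticipate any real obstacle: the statement is essentially a bookkeeping exercise in which the Legendre transform translates the Lagrangian side into the Hamiltonian side. The only mildly delicate point is the computation of $\partial h/\partial z$, where one has to be careful to treat $\xi$ as a function of $(\mu,z)$ and verify the cancellation of the $\partial\xi/\partial z$ terms; this is where regularity of $l$ is invoked. Once that identity is established, the rest is a direct substitution, and the resulting pair of equations is exactly the Lie--Poisson--Jacobi system stated in the proposition, which in the next subsection will be interpreted as the Hamiltonian vector field of $h$ with respect to the Jacobi structure on $\mathfrak{g}^{*}\times\R$.
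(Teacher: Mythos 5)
Your proposal is correct and follows essentially the same route as the paper's proof: differentiate $\mu(t)$, invoke the Euler--Poincaré--Herglotz equations, and translate via the Legendre identities $\delta h/\delta\mu=\xi$ and $\partial h/\partial z=-\partial l/\partial z$, with the $\dot z$ equation coming directly from $\dot z=l$ and the definition of $h$. The only difference is that you spell out the cancellation of the $\partial\xi/\partial z$ terms in computing $\partial h/\partial z$, which the paper leaves implicit in the phrase ``using the equalities preceding the proposition statement.''
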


\begin{proof}
  By differentiating the curve $t\mapsto \mu(t)$ we get
    $$\dot{\mu} = \frac{d}{dt}\frac{\delta l}{\delta \xi} = ad^{*}_{\xi} \frac{\delta l}{\delta \xi} + \frac{\delta l}{\delta \xi} \frac{\partial l}{\partial z},$$
    where we used Euler-Poincar{\'e}-Herglotz equations in the last equality. Using the equalities preceding the proposition statement, we deduce
    $$\dot{\mu} = \text{ad}^{*}_{\frac{\delta h}{\delta \mu}}\mu-\mu\frac{\partial h}{\partial z}.$$ The equation for $\dot{z}$ follows from the definition of $h$.
\end{proof}

These equations will be called \textit{Lie-Poisson-Jacobi equations}. In the next subsections, we will show that this is a Hamiltonian system with respect to a Jacobi structure on $\mathfrak{g}^{*}\times \R$.

\begin{example}
	Let us consider again the matrix Lie group $SO(3)$ and consider the hyperregular $G$-invariant Lagrangian function
	$$L(R,\dot{R},z) = \frac{1}{2}\langle\langle \dot{R},\dot{R} \rangle\rangle - \gamma z,$$
	so that we might consider the reduced Lagrangian which is the restriction of $L$ to $\mathfrak{g}\times \R$
	$$l(\xi,z)=\frac{1}{2}\xi^{T}\mathbb{I}\xi-\gamma z.$$
	Under all the assumptions taken in Example \ref{so3:example}, the Legendre transform is given by the map
	\begin{equation*}
		\F l(\xi,z)=(\mathbb{I}\xi,z).
	\end{equation*}
	Suppose that $\mathbb{I}$ is a positive definite symmetric matrix, so that $\F l$ is a diffeomorphism and we might define the Hamiltonian function
	$$h(\mu,z)= \mu^{T} \mathbb{I}^{-1}\mu -l(\mathbb{I}^{-1}\mu,z)=\mu^{T} \mathbb{I}^{-1}\mu -\frac{1}{2}(\mathbb{I}^{-1}\mu)^{T}\mu + \gamma z,$$
	and letting $\mathbb{J}=(2\mathbb{I}^{-1}-(\mathbb{I}^{-1})^{T})=\mathbb{I}^{-1}$ we get
	$$h(\mu,z)= \frac{1}{2}\mu^{T}\mathbb{J}\mu + \gamma z.$$
	The Lie-Poisson-Jacobi equations are then given by
	\begin{equation*}
		\begin{split}
			\dot{\mu} & =\text{ad}^{*}_{\frac{\delta h}{\delta \mu}}\mu-\gamma\mu \\
			\dot{z} & = \frac{1}{2}\mu^{T}\mathbb{J}\mu - \gamma z.
	\end{split}
	\end{equation*}\hfill$\diamond$
\end{example}

\subsection{Lie-Poisson-Jacobi bracket}

Let us recall first the definition of a Jacobi structure {(see \cite{Kirillov1976} and \cite{Lichnerowicz1978}).

\begin{definition}(Jacobi structure)
	A \textit{Jacobi structure} on a manifold $M$ is a tuple $(\Lambda,E)$, where $\Lambda$ is a bi-vector field and $E$ is a vector field, satisfying the following equations
	\begin{equation*}
		[\Lambda,\Lambda]=2E\wedge \Lambda, \quad [E,\Lambda]=0,
	\end{equation*}
	with $[\cdot,\cdot]$ the Schouten-Nijenhuis bracket.
\end{definition}

\begin{definition}(Jacobi bracket)
	A Jacobi bracket on a manifold $M$ is a bilinear, skew-symmetric map $\{\cdot,\cdot\}:C^{\infty}(M)\times C^{\infty}(M) \rightarrow C^{\infty}(M)$ satisfying the Jacobi identity and the following weak Leibniz rule
	\begin{equation*}
		\text{supp}(\{f,g\})\subseteq \text{supp}(f)\cap \text{supp}(g).
	\end{equation*}
\end{definition}

A \textit{Jacobi manifold} is a manifold possessing either a Jacobi structure or a Jacobi bracket since these two definitions are equivalent (see \cite{Kirillov1976}, \cite{Lichnerowicz1978}, \cite{marle}, \cite{ibanez1997co}). However, it is much more convenient to introduce a Jacobi structure for practical purposes.

Now, from the Jacobi structure we can define an associated Jacobi bracket as follows: 
\[
\{f, g\}=\Lambda(df, dg)+f E(g)-g E(f), \quad f, g\in C^{\infty}(M, \R)
\]
In this case, the weak Leibniz rule is equivalent to the generalized Leibniz rule
\begin{equation}
\{f, gh\} = g\{f, h\} + h\{f, g\} + ghE(f),
\end{equation}
In this sense, this bracket generalizes the well-known Poisson brackets. Indeed, a Poisson manifold is a particular case of Jacobi manifold in which $E=0$. 

Given a Jacobi manifold $(M,\Lambda,E)$, we consider the map $\sharp_{\Lambda}:\Omega^{1}(M)\rightarrow \mathfrak{X}(M)$ defined by
\begin{equation*}
\sharp_{\Lambda}(\alpha)=\Lambda(\alpha,\cdot).
\end{equation*}
We have that $\sharp_{\Lambda}$ is a morphism of $C^{\infty}$-modules, though it may fail to be an isomorphism. Given a function $f: M \rightarrow \R$,  we define the Hamiltonian vector field $X_f$ by
\[
X_f=\sharp_{\Lambda} (df) + f E.
\]

Contact structures are examples of Jacobi structures. Given a contact manifold $(M,\eta)$, we may associate it a natural Jacobi structure. Indeed, we define the bivector $\Lambda$ as  
\begin{equation}\label{Lambda:intrinsic}
\Lambda(\alpha, \beta)=-d\eta(\flat^{-1}(\alpha), \flat^{-1}(\beta)), \qquad \alpha, \beta \in \Omega^1(M)\; .
\end{equation}
So that the pair $(\Lambda, E=-\Reeb)$ is a Jacobi structure (see \cite{Lichnerowicz1978}, \cite{de2017cosymplectic}).

In Darboux coordinates, the bivector $\Lambda$ reads as
\begin{equation}\label{Lambda:coordinates}
\Lambda=\frac{\partial}{\partial p_i}\wedge \left(\frac{\partial}{\partial q^i}+p_i\frac{\partial}{\partial z}\right).
\end{equation}

In addition, given a function $h:M\rightarrow \R$, the \textit{contact Hamiltonian vector field} is given by
\begin{equation*}
	X_{h}=\sharp_{\Lambda}(dh)-h \Reeb
\end{equation*}
and along its integral curves the following equation is satisfied with respect to the associated Jacobi bracket
\begin{equation}\label{hamiltonian:evolution}
\dot{f}=\{h,f\}-f\frac{\partial h}{\partial z}, \quad \forall f\in C^{\infty}(M).
\end{equation}

The preceding equation implies in particular that the Hamiltonian function is not conserved along its integral curves since
\begin{equation*}
\dot{h}=-h\frac{\partial h}{\partial z}.
\end{equation*}

In Darboux coordinates, the bracket is given by
\begin{eqnarray*}
	\{f,g\} = \frac{\partial f}{\partial q^i}\frac{\partial g}{\partial p_i}-\frac{\partial f}{\partial p_i}\frac{\partial g}{\partial q^{i}}
	-\frac{\partial f}{\partial z}\left(p_i\frac{\partial g}{\partial p_i}-g\right)+\frac{\partial g}{\partial z}\left(p_i\frac{\partial f}{\partial p_i}-f\right)
\end{eqnarray*}
and the Hamiltonian vector field is given in canonical coordinates by:
\[
X_h=\frac{\partial h}{\partial p_i}\frac{\partial}{\partial q^i}
-\left(\frac{\partial h}{\partial q^i}+p_i\frac{\partial h}{\partial z}\right)
\frac{\partial}{\partial p_i}+
\left(p_i\frac{\partial h}{\partial p_i}-h\right)\frac{\partial}{\partial z}
\]

So, given a manifold $Q$, the contact manifold $(T^{*}Q\times \R, \eta_{Q})$, where $\eta_{Q}$ is the contact form defined in \eqref{eq:cotangent_contact_structure}, has a canonical Jacobi structure and an associated Jacobi bracket defined as above, which we will denote by $\{\cdot,\cdot\}_{can}$ from now on.

Let us now introduce a Jacobi structure on $\mathfrak{g}^{*}\times \R$. 

\begin{proposition}\label{prop4.6}
	The structure given by
	\begin{equation}
		\Lambda(\mu,z)(df,dg)=\left\langle \mu,\left[ \frac{\delta f}{\delta \mu},\frac{\delta g}{\delta \mu} \right] \right\rangle+ \left\langle \mu,\frac{\delta f}{\delta \mu} \right\rangle \frac{\partial g}{\partial z} - \left\langle \mu,\frac{\delta g}{\delta \mu} \right\rangle \frac{\partial f}{\partial z}
	\end{equation}
	and $\displaystyle{E = -\Reeb=-\frac{\partial}{\partial z}}$ is a Jacobi structure on $\mathfrak{g}^{*}\times \R$ and induces a Jacobi bracket given by
	\begin{equation}
		\{f,g\}(\mu, z) =\Lambda(\mu)(df,dg)-f\frac{\partial g}{\partial z}+g\frac{\partial f}{\partial z}.
	\end{equation}
\end{proposition}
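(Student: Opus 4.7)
The plan is to obtain the Jacobi structure on $\mathfrak{g}^{*}\times\R$ by reducing the canonical Jacobi structure on $T^{*}G\times\R$ via the cotangent-lifted left action of $G$; this immediately delivers both integrability identities $[\Lambda,\Lambda]=2E\wedge\Lambda$ and $[E,\Lambda]=0$ without any Schouten-Nijenhuis computation.

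First, I would observe that the one-form $\eta_{G}=dz-\theta_{G}$ is invariant under the cotangent lift of left translations (extended trivially in the $\R$-factor), since $\theta_{G}$ is invariant as the pullback of the tautological form on $T^{*}G$. It follows that the Reeb vector field $\partial/\partial z$ and the bivector $\Lambda_{can}$ from \eqref{Lambda:intrinsic} are $G$-invariant, and hence descend along the quotient map $(T^{*}G\times\R)/G\to\mathfrak{g}^{*}\times\R$ given by left trivialization $(g,\alpha_{g},z)\mapsto(T_{e}^{*}\mathcal{L}_{g}(\alpha_{g}),z)$. Since the Schouten-Nijenhuis identities are preserved under projection by a submersion, the resulting pair $(\Lambda,E)$ on $\mathfrak{g}^{*}\times\R$ is automatically a Jacobi structure, and the stated bracket formula then follows from the general identity $\{f,g\}=\Lambda(df,dg)+fE(g)-gE(f)$ with $E=-\partial/\partial z$.

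Next, I would match the reduced structure with the explicit formula in the statement. For $f,g\in C^{\infty}(\mathfrak{g}^{*}\times\R)$, lift them to the left-invariant functions $F,G\in C^{\infty}(T^{*}G\times\R)$ with $F(\alpha_{g},z)=f(T_{e}^{*}\mathcal{L}_{g}(\alpha_{g}),z)$, and evaluate $\{F,G\}_{can}$ at $(e,\mu,z)$ using the Darboux formula recalled above. The fiber derivatives $\partial F/\partial p_{i}$, $\partial G/\partial p_{i}$ evaluated at the identity coincide with the components of the functional derivatives $\delta f/\delta\mu$, $\delta g/\delta\mu$; the base derivatives $\partial F/\partial q^{j}$ at $e$ encode the infinitesimal coadjoint action of $e_{j}$ on $\mu$, so the pure Poisson part of $\{F,G\}_{can}$ collapses to the Lie-Poisson term $\langle\mu,[\delta f/\delta\mu,\delta g/\delta\mu]\rangle$; the cross-terms arising from the $p_{i}\,\partial/\partial z$ coefficient in $\Lambda_{can}$ produce $\langle\mu,\delta f/\delta\mu\rangle\,\partial g/\partial z-\langle\mu,\delta g/\delta\mu\rangle\,\partial f/\partial z$; and the pure $z$-terms give the $-f\,\partial g/\partial z+g\,\partial f/\partial z$ contribution associated with $E=-\partial/\partial z$.

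The main obstacle is the identification of $\partial F/\partial q^{j}|_{e}$ with the coadjoint action: one must differentiate $F(\alpha_{g},z)=f(T_{e}^{*}\mathcal{L}_{g}(\alpha_{g}),z)$ in $g$ along $e_{j}$ and recognize the outcome as $\mathrm{ad}^{*}_{e_{j}}\mu$ paired against $\delta g/\delta\mu$. As a purely computational alternative, one may choose a basis $\{e_{i}\}$ of $\mathfrak{g}$ with structure constants $C^{k}_{ij}$ and coordinates $(\mu_{i},z)$ and write
\begin{equation*}
\Lambda=\tfrac{1}{2}C^{k}_{ij}\mu_{k}\,\frac{\partial}{\partial\mu_{i}}\wedge\frac{\partial}{\partial\mu_{j}}+\mu_{i}\,\frac{\partial}{\partial\mu_{i}}\wedge\frac{\partial}{\partial z},\qquad E=-\frac{\partial}{\partial z},
\end{equation*}
then verify $[\Lambda,\Lambda]=2E\wedge\Lambda$ and $[E,\Lambda]=0$ directly from the Schouten-Nijenhuis bracket; the Jacobi identity of $\mathfrak{g}$ forces the pure Lie-Poisson piece $\Lambda_{0}$ to satisfy $[\Lambda_{0},\Lambda_{0}]=0$, while the remaining cross terms balance exactly against $2E\wedge\Lambda$.
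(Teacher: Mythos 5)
Your proposal is correct, but your primary route is genuinely different from the paper's. The paper proves Proposition \ref{prop4.6} by a direct verification: it decomposes $\Lambda=\mathrm{pr}_{1}^{*}\Lambda_{0}+\Delta\wedge\Reeb$ and computes the Schouten--Nijenhuis brackets, using $[\Lambda_{0},\Lambda_{0}]=0$ (the Jacobi identity of $\mathfrak{g}$) and the homogeneity identity $[\Lambda_{0},\Delta]=\Lambda_{0}$ to obtain $[\Lambda,\Lambda]=2\Lambda_{0}\wedge\Reeb=2E\wedge\Lambda$ together with $[\Reeb,\Lambda]=0$; this is exactly your ``purely computational alternative,'' written intrinsically rather than in structure-constant coordinates, and your claim that the cross terms balance against $2E\wedge\Lambda$ checks out. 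Your main route instead realizes $(\Lambda,E)$ as the projection of the canonical contact Jacobi structure on $T^{*}G\times\R$ under the free and proper cotangent-lifted action, so the integrability conditions are inherited from the fact that the Schouten bracket of projectable multivector fields projects. That is legitimate and more conceptual, but be aware of two points. First, it front-loads material the paper deliberately postpones: identifying the reduced bracket with the explicit formula is essentially Proposition \ref{Jacobi:bracket:correspondence}, and the projectability argument is the first half of Theorem \ref{reduction:theorem}; since the paper proves those \emph{after} (and partly by means of) Proposition \ref{prop4.6}, your coordinate-matching step must be carried out independently to avoid circularity --- your sketch does this, but the identification of $\partial F/\partial q^{j}|_{e}$ with the coadjoint action is the entire content of Lie--Poisson reduction and should be written out, including the usual sign convention issue (left-invariant functions close under the canonical bracket into the Lie--Poisson bracket only up to the choice of sign, which must be fixed consistently with the paper's). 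Second, the two routes buy different things: the reduction argument explains where the structure comes from and extends immediately to the semidirect-product bracket used later, while the direct Schouten computation is shorter and self-contained, which is why the paper chose it.
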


\begin{proof}
	We just have to prove that $\Lambda$ is a Jacobi structure, since then the map $\{\cdot,\cdot\}$ would be the associated Jacobi bracket. Using the fact that
	\begin{equation*}
		\Lambda= \text{pr}_{1}^{*}\Lambda_{0} +\Delta \wedge \Reeb,
	\end{equation*}
	where $\Lambda_{0}$ is the Lie-Poisson structure on $\mathfrak{g}^{*}$ defined by
 $$\Lambda_{0}(\mu)(df, dg) = \left\langle \mu,\left[ \frac{\delta f}{\delta \mu},\frac{\delta g}{\delta \mu} \right] \right\rangle, \quad f,g \in C^{\infty}(\mathfrak{g}^{*}),$$
 $\text{pr}_{1}:\mathfrak{g}^{*}\times \R \rightarrow \mathfrak{g}^{*}$ is the projection onto the first factor, $\Delta$ is the vector field on $\mathfrak{g}^{*}\times \R$ defined by
	\begin{equation*}
		\Delta(\mu,z) = \left. \frac{d}{dt} \right|_{t=0}(\mu+t\mu,z)
	\end{equation*}
	and $\Reeb=\frac{\partial}{\partial z}$, we may deduce after some computations involving the Schouten-Nijenhuis bracket and interior products that (see \cite{marle1997schouten})
	\begin{equation*}
		[\Lambda,\Lambda] = 2[\Lambda_{0},\Delta \wedge \Reeb] =2\Lambda_{0}\wedge \Reeb,
	\end{equation*}
	where we used that $[\Delta \wedge \Reeb, \Lambda_{0}]=[\Lambda_{0}, \Delta \wedge \Reeb]$ (since $\Lambda_{0}$ and $\Delta \wedge \Reeb$ are both $(2,0)$-tensors) and the fact that $[\Lambda_0,\Delta]=\Lambda_0$. The previous equality is equivalent to
	\begin{equation*}
		[\Lambda,\Lambda] = 2\Lambda\wedge \Reeb,
	\end{equation*}
    using linearity and skew-symmetry of the wedge product.

In addition, we also have that $[\Reeb,\Lambda]$ vanishes:
 $$[\Reeb,\Lambda] = [\Reeb, \Lambda_{0}] + [\Reeb, \Delta \wedge \Reeb].$$
 the first term vanishes since $\Lambda_{0}$ is pulled-back from $\mathfrak{g}^{*}$ and so $[\Reeb, \Lambda_{0}] = \mathcal{L}_{\Reeb} \Lambda_{0}=0$. The second term also vanishes since $[\Reeb, \Delta \wedge \Reeb] = [\Reeb, \Delta]\wedge \Reeb + \Delta \wedge [\Reeb, \Reeb] = [\Reeb, \Delta]\wedge \Reeb$ and the Lie bracket of $[\Reeb, \Delta]$ is zero.
Hence, $(\Lambda,E=-\Reeb)$ is indeed a Jacobi structure.
\end{proof}
\begin{remark}
   The Jacobi structure $(\Lambda= \text{pr}_{1}^{*}\Lambda_{0} +\Delta \wedge R$, $R=-\partial/\partial z)$ on $\mathfrak{g}^{*}\times\mathbb{R}$ is linear on the vector space $\mathfrak{g}^{*}\times\mathbb{R}\to\mathbb{R}$. In fact, it is a particular case of a more general class of linear Jacobi structures on vector bundles which were previously considered in \cite{iglesias2000some} (see also \cite{iglesias2001some, iglesiasphd}).
\end{remark}
\begin{definition}\label{LPJ-bracket-def}
    The Jacobi structure and bracket defined in the previous Proposition are called the \textit{Lie-Poisson-Jacobi structure} and bracket  on $\mathfrak{g}^{*}\times \R$, respectively.
\end{definition}

In the next proposition, we will define the Hamiltonian vector fields associated with this Jacobi structure and we will see what is the expression of the corresponding Hamilton equations. 

\begin{proposition} Let $(\Lambda, R)$  be the Jacobi structure on $\mathfrak{g}^{*}\times \R$ defined above. If $h:\mathfrak{g}^{*}\times \R \rightarrow \R$ is the reduced Hamiltonian function, the integral curves of the Hamiltonian vector field $X_{h}$ satisfy the Lie-Poisson-Jacobi equations
	\begin{equation*}
		\begin{split}
			\dot{\mu} & =\emph{ad}^{*}_{\frac{\delta h}{\delta \mu}}\mu-\mu\frac{\partial h}{\partial z} \\
			\dot{z} & = \langle \mu,\frac{\delta h}{\delta \mu} \rangle-h(\mu,z).
		\end{split}
	\end{equation*}
\end{proposition}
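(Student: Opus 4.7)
The plan is to exploit the defining formula for the Hamiltonian vector field of a Jacobi structure together with the explicit expression of the Lie-Poisson-Jacobi bivector $\Lambda$ and test the flow against a convenient family of functions. Recall that $X_h = \sharp_\Lambda(dh) + hE = \sharp_\Lambda(dh) - h\Reeb$, so that for any $f \in C^\infty(\mathfrak{g}^*\times\R)$ and any integral curve of $X_h$,
\begin{equation*}
\dot{f} \;=\; X_h(f) \;=\; \Lambda(dh,df) \;-\; h\,\frac{\partial f}{\partial z}.
\end{equation*}
The idea is to evaluate this identity on two natural choices of $f$ that completely determine the curve $t \mapsto (\mu(t),z(t))$: the linear functions $f_\eta(\mu,z) := \langle \mu,\eta\rangle$ for arbitrary $\eta\in\mathfrak{g}$, which extract the $\mathfrak{g}^*$-component of the velocity, and $f(\mu,z) := z$, which extracts the evolution of the real variable.

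For the first choice, one computes $\delta f_\eta/\delta\mu = \eta$ and $\partial f_\eta/\partial z = 0$, so applying the expression for $\Lambda$ from Proposition \ref{prop4.6} gives
\begin{equation*}
\Lambda(dh,df_\eta) \;=\; \left\langle \mu,\Bigl[\tfrac{\delta h}{\delta\mu},\eta\Bigr]\right\rangle \;-\; \langle \mu,\eta\rangle\,\frac{\partial h}{\partial z} \;=\; \Bigl\langle \mathrm{ad}^*_{\delta h/\delta\mu}\mu \;-\; \mu\,\tfrac{\partial h}{\partial z},\,\eta\Bigr\rangle,
\end{equation*}
using the defining property $\langle \mathrm{ad}^*_\xi\mu,\eta\rangle = \langle\mu,[\xi,\eta]\rangle$ of the coadjoint action. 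Since $\dot{f_\eta} = \langle\dot\mu,\eta\rangle$ and the identity must hold for every $\eta\in\mathfrak{g}$, one concludes that $\dot\mu = \mathrm{ad}^*_{\delta h/\delta\mu}\mu - \mu\,\partial h/\partial z$, which is the first equation.

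For the second choice, $\delta f/\delta\mu = 0$ and $\partial f/\partial z = 1$, so the same formula for $\Lambda$ yields
\begin{equation*}
\Lambda(dh,dz) \;=\; \left\langle \mu,\tfrac{\delta h}{\delta\mu}\right\rangle,
\end{equation*}
and hence $\dot{z} = \langle \mu,\delta h/\delta\mu\rangle - h(\mu,z)$, which is the second equation.

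There is no real obstacle here beyond bookkeeping: the proof is a direct verification once one unpacks the definition of $X_h$ and feeds the formula of Proposition \ref{prop4.6} into it. The only point that requires a little care is the sign/pairing convention for $\mathrm{ad}^*$ when converting $\langle \mu,[\delta h/\delta\mu,\eta]\rangle$ into $\langle \mathrm{ad}^*_{\delta h/\delta\mu}\mu,\eta\rangle$, and the fact that the vertical identification $T_\mu\mathfrak{g}^*\cong\mathfrak{g}^*$ is what allows us to read off $\dot\mu$ from the family of scalar equations indexed by $\eta$.
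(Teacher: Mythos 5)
Your proof is correct and follows essentially the same route as the paper's: both start from the evolution identity $\dot f = \{h,f\} - f\,\partial h/\partial z$ (equivalently $\dot f = \Lambda(dh,df) - h\,\partial f/\partial z$), insert the explicit bivector of Proposition \ref{prop4.6}, and read off the equations; the paper keeps $f$ arbitrary and concludes by arbitrariness, while you make that last step concrete by testing against the linear functions $f_\eta = \langle\mu,\eta\rangle$ and the coordinate $z$. The sign conventions ($\langle\mathrm{ad}^*_\xi\mu,\eta\rangle = \langle\mu,[\xi,\eta]\rangle$ and $E=-\Reeb$) are handled consistently with the paper.
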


\begin{proof}
	Note that if $(\mu(t),z(t))$ is an integral curve of the contact Hamiltonian vector field $X_{h}$, then we have that for all function $f$
	\begin{equation*}
		\dot{f}= \langle df(\mu,z), (\dot{\mu},\dot{z}) \rangle = \left\langle \dot{\mu},\frac{\delta f}{\delta \mu} \right\rangle + \frac{\partial f}{\partial z}\dot{z}.
	\end{equation*}
	Also, using the definition of the Jacobi bracket we have that
	\begin{equation*}
		\{h,f\}(\mu, z)=\left\langle \mu,\left[ \frac{\delta h}{\delta \mu},\frac{\delta f}{\delta \mu} \right] \right\rangle+ \left( \left\langle \mu,\frac{\delta h}{\delta \mu} \right\rangle-h \right) \frac{\partial f}{\partial z} - \left( \left\langle \mu,\frac{\delta f}{\delta \mu} \right\rangle - f \right) \frac{\partial h}{\partial z}.
	\end{equation*}
	Thus using \eqref{hamiltonian:evolution} we deduce that
	\begin{equation*}
		\dot{f}= \{h,f\}(\mu, z)- f\frac{\partial h}{\partial z} =\left\langle \text{ad}_{\frac{\delta h}{\delta \mu}}^{*}\mu-\mu\frac{\partial h}{\partial z},\frac{\delta f}{\delta \mu} \right\rangle + \left( \left\langle \mu,\frac{\delta h}{\delta \mu} \right\rangle-h \right) \frac{\partial f}{\partial z}.
	\end{equation*}
	Since the function $f$ is arbitrary the result follows.
\end{proof}

The following result relates the new bracket on the reduced product space $\mathfrak{g}^{*}\times \R$ with the canonical Jacobi bracket on $T^{*}G\times \R$. Recall how the the left action of $G$ on itself lifts to the cotangent space $T^{*}G$.
The cotangent left action is given by the map
\begin{equation*}
	\phi_{g}:T^{*}G \rightarrow T^{*}G, \quad \alpha_{h}\in T_{h}^{*}G \mapsto T_{gh}^{*}\mathcal{L}_{g^{-1}}(\alpha_{h})\in T_{gh}^{*}G.
\end{equation*}

\begin{proposition}\label{Jacobi:bracket:correspondence}
	Let $F$ and $H$ be $G$-invariant functions on $T^{*}G\times \R$ and let $f$ and $h$ be their restrictions to $\mathfrak{g}^{*}\times \R$, respectively. Then
	\begin{equation*}
		\{ F,H \}_{can}(\alpha,z) = \{ f,h \} (T_{e}^{*}\mathcal{L}_{g}(\alpha),z), \quad \alpha \in T_{g}^{*}G,
	\end{equation*}
	where $\{\cdot,\cdot\}_{can}$ is the canonical Jacobi bracket on $T^{*}G\times \R$ and $\{\cdot,\cdot\}$ is the Lie-Poisson-Jacobi structure on $\mathfrak{g}^{*}\times \R$.
\end{proposition}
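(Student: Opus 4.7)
The plan is to leverage a decomposition of the canonical contact Jacobi structure on $T^{*}G\times\R$ that mirrors the decomposition $\Lambda = \text{pr}_{1}^{*}\Lambda_{0} + \Delta\wedge\Reeb$ used in the proof of Proposition \ref{prop4.6}. From the Darboux formula \eqref{Lambda:coordinates} one reads off
\begin{equation*}
\Lambda_{can} = \pi_{1}^{*}\Lambda_{T^{*}G} + \Delta_{T^{*}G}\wedge\Reeb,
\end{equation*}
where $\pi_{1}:T^{*}G\times\R\to T^{*}G$ is the projection, $\Lambda_{T^{*}G}$ is the canonical (symplectic) Poisson bivector on $T^{*}G$, and $\Delta_{T^{*}G}$ is the Liouville vector field on $T^{*}G$ extended trivially to the product. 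Substituting this into the formula for the Jacobi bracket associated with $(\Lambda_{can},-\Reeb)$ gives
\begin{equation*}
\{F,H\}_{can}(\alpha,z) = \Lambda_{T^{*}G}(dF,dH)(\alpha) + \Delta_{T^{*}G}(F)\Reeb(H) - \Delta_{T^{*}G}(H)\Reeb(F) - F\Reeb(H) + H\Reeb(F).
\end{equation*}

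Next I would handle each of the three types of terms using the $G$-invariance of $F$ and $H$. For the symplectic piece, since $F$ and $H$ descend to $f$ and $h$ on $\mathfrak{g}^{*}$ when the $z$-argument is frozen, classical Lie-Poisson reduction on $T^{*}G$ gives
\begin{equation*}
\Lambda_{T^{*}G}(dF,dH)(\alpha) = \left\langle\mu,\left[\tfrac{\delta f}{\delta\mu},\tfrac{\delta h}{\delta\mu}\right]\right\rangle, \qquad \mu := T_{e}^{*}\mathcal{L}_{g}(\alpha).
\end{equation*}
For the Reeb terms, since the cotangent-lifted action $\phi_{g}$ is trivial in the $\R$-direction, invariance of $F$ forces $\Reeb(F)(\alpha,z)=\tfrac{\partial f}{\partial z}(\mu,z)$ and similarly for $H$. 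For the Liouville terms, the key identification is that fiberwise dilations on $T^{*}G$ commute with $\phi_{g}$ and are intertwined, via the linear isomorphism $T_{e}^{*}\mathcal{L}_{g}:T_{g}^{*}G\to\mathfrak{g}^{*}$, with ordinary dilations on $\mathfrak{g}^{*}$; hence
\begin{equation*}
\Delta_{T^{*}G}(F)(\alpha,z) = \left\langle\mu,\tfrac{\delta f}{\delta\mu}\right\rangle, \qquad \Delta_{T^{*}G}(H)(\alpha,z) = \left\langle\mu,\tfrac{\delta h}{\delta\mu}\right\rangle.
\end{equation*}

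Finally I would assemble the three pieces and recognize the outcome as exactly
\begin{equation*}
\Lambda(\mu,z)(df,dh) - f\tfrac{\partial h}{\partial z} + h\tfrac{\partial f}{\partial z} = \{f,h\}(\mu,z),
\end{equation*}
using the explicit form of the Lie-Poisson-Jacobi bracket from Proposition \ref{prop4.6} and Definition \ref{LPJ-bracket-def}. The only genuinely subtle step will be the Liouville identification: one must check that $\Delta_{T^{*}G}(F)(\alpha,z)$, computed as the derivative of $F$ along the radial ray $t\mapsto(1+t)\alpha$ inside $T_{g}^{*}G$, coincides via left trivialization with the radial derivative of $f$ at $\mu$, which is $\langle\mu,\delta f/\delta\mu\rangle$ by the definition of the functional derivative given in the Notation section. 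This reduces to the linearity of $T_{e}^{*}\mathcal{L}_{g}$ together with the $G$-invariance of $F$; once this is in place, the remainder of the argument is bookkeeping.
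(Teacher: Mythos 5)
Your proof is correct, and its overall strategy coincides with the paper's: split the canonical Jacobi bracket on $T^{*}G\times \R$ into a Poisson piece, a dilation (Liouville) piece and a zeroth--order piece, reduce the Poisson piece by classical Lie--Poisson reduction, and push the remaining terms down to $\mathfrak{g}^{*}\times \R$ using $G$-invariance. Where you genuinely add value is in the Liouville contribution: the paper's intermediate identity \eqref{other:form} is printed as $\{F,H\}_{can}=\{F\circ i_{z},H\circ i_{z}\}_{T^{*}G}-F\frac{\partial H}{\partial z}+H\frac{\partial F}{\partial z}$, which, compared with the Darboux expression of the canonical Jacobi bracket, omits the terms $\frac{\partial H}{\partial z}\,p_{i}\frac{\partial F}{\partial p_{i}}-\frac{\partial F}{\partial z}\,p_{i}\frac{\partial H}{\partial p_{i}}$; yet the target bracket of Proposition \ref{prop4.6} does contain the matching terms $\left\langle \mu,\frac{\delta f}{\delta \mu}\right\rangle\frac{\partial h}{\partial z}-\left\langle \mu,\frac{\delta h}{\delta \mu}\right\rangle\frac{\partial f}{\partial z}$ coming from $\Delta\wedge\Reeb$. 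Your bivector decomposition $\Lambda_{can}=\pi_{1}^{*}\Lambda_{T^{*}G}+\Delta_{T^{*}G}\wedge\Reeb$, together with the identification $\Delta_{T^{*}G}(F)(\alpha,z)=\left\langle \mu,\frac{\delta f}{\delta\mu}\right\rangle$ via linearity of $T_{e}^{*}\mathcal{L}_{g}$ (equivalently, the fact that the cotangent-lifted action is fiberwise linear and hence commutes with dilations), is exactly what is needed to make those terms correspond, so your argument closes a step that the paper's proof glosses over. The only point to watch is the sign convention for the Lie--Poisson bracket of left-invariant functions; as long as you take $\Lambda_{0}$ with the same sign as in Proposition \ref{prop4.6}, the bookkeeping is consistent and the conclusion follows as you state.
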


\begin{proof}
	The canonical Jacobi bracket on $T^{*}G\times \R$ has the following form
	\begin{equation}\label{other:form}
		\{ F,H \}_{can} = \{ F\circ i_{z},H\circ i_{z} \}_{T^{*}G}-F\frac{\partial H}{\partial z}+H\frac{\partial F}{\partial z},
	\end{equation}
	where $\{\cdot,\cdot\}_{T^{*}G}$ is the canonical Poisson bracket on $T^{*}G$ and $i_{z}:T^{*}G \hookrightarrow T^{*}G\times \R$ is the inclusion mapping $i_{z}(\alpha)=(\alpha,z)$. Observe that it is a standard fact that
	\begin{equation*}
		\{ F,H \}_{T^{*}G}(\alpha)=\{ f,h \}_{0} (T_{e}^{*}\mathcal{L}_{g}(\alpha)),
	\end{equation*}
	where $\{\cdot,\cdot\}_{0}$ is the Lie-Poisson bracket on $\mathfrak{g}^{*}$ (see \cite{marsden2013introduction}, for instance). Moreover, since $F$ and $H$ are invariant functions we have that also their derivative with respect to the contact variable $z$. Therefore, \eqref{other:form} equals
	\begin{align*}
		\{ F,H \}_{can} (\alpha,z)=& \{ f\circ i_{z},h\circ i_{z} \}_{0}(T_{e}^{*}\mathcal{L}_{g}(\alpha))\\&-f(\alpha,z)\frac{\partial h}{\partial z}(\alpha,z)+h(\alpha,z)\frac{\partial f}{\partial z}(\alpha,z),
	\end{align*}
	which is exactly the bracket $\{ f,h \} (T_{e}^{*}\mathcal{L}_{g}(\alpha),z)$ (see Proposition \ref{prop4.6}).
\end{proof}


\begin{example}
    The reduced space of $T^{*}SO(3)\times \R$ is $\mathfrak{so}(3)^{*}\times \R$, which we can identify with $\R^{4}$ under the suitable Lie algebra isomorphism. The corresponding Lie-Poisson-Jacobi bracket is
    \begin{equation*}
        \begin{split}
            \{h,f\}(\mu,z) & = \mu \cdot \left( \frac{\delta h}{\delta \mu} \times \frac{\delta f}{\delta \mu} \right) + \left(\mu \cdot \frac{\delta h}{\delta \mu}   - h\right)\frac{\partial f}{\partial z} - \left(\mu \cdot \frac{\delta f}{\delta \mu}  - f\right)\frac{\partial h}{\partial z} \\
            & = \frac{\delta f}{\delta \mu} \cdot \left( \mu \times \frac{\delta h}{\delta \mu} - \mu \frac{\partial h}{\partial z}\right) + \left(\mu \cdot \frac{\delta h}{\delta \mu}   - h\right)\frac{\partial f}{\partial z} + f\frac{\partial h}{\partial z}
        \end{split}
    \end{equation*}
    Given the Hamiltonian function $h(\mu,z) = \frac{1}{2}\mu^{T}\mathbb{J}\mu + \gamma z$, the corresponding Lie-Poisson-Jacobi equations are then
    \begin{equation*}
        \begin{split}
            \dot{\mu} & = \mu \times \mathbb{J}\mu  - \gamma \mu \\
            \dot{z} & = \frac{1}{2}\mu^{T}\mathbb{J}\mu - \gamma z.
        \end{split}
    \end{equation*}
\end{example}

\subsection{Lie-Poisson-Jacobi reduction theorem}

The objective of this section is two-fold: we are going to prove that the standard momentum map of the Lie group $T^{*}G$ induces a Jacobi map from $T^{*}G\times \R$ to $\mathfrak{g}^{*}\times \R$. Then we prove that Jacobi maps allow to perform a reduction and to find a Jacobi structure on $\mathfrak{g}^{*}\times \R$. If we start with the canonical Jacobi structure then the reduced structure will be precisely the Lie-Poisson-Jacobi  structure we defined in the last section. Moreover, we will prove that Jacobi maps project Hamiltonian vector fields onto Hamiltonian vector fields. This is the Jacobi version of the Poisson reduction Theorem given in \cite{marsden2013introduction}.

It is well-known (see \cite{Abraham1978}, \cite{marsden2013introduction}) that the map $J:T^{*}G\rightarrow \mathfrak{g}^{*}$ defined by
\begin{equation}\label{LG:momentum}
	\langle J(\alpha_{g}),\xi \rangle=\langle T_{e}^{*}\mathcal{L}_{g}(\alpha_{g}),\xi \rangle, \quad \alpha_{g}\in T_{g}^{*} G,\,\, \xi\in \mathfrak{g},
\end{equation}
is a momentum map on the cotangent bundle of the Lie group. Moreover it is a Poisson map with respect to the standard Poisson structure and Lie-Poisson structure on $T^{*}G$ and $\mathfrak{g}^{*}$, respectively.

\begin{definition}
	Given two Jacobi manifolds $(M_{1},\Lambda_{1},E_{1})$ and $(M_{2},\Lambda_{2},E_{2})$, with Jacobi brackets $\{\cdot,\cdot\}_1$ and $\{\cdot,\cdot\}_2$, respectively, and a smooth map $\varphi:M_{1} \rightarrow M_{2}$, then
	 the map $\varphi$ is said to be a \textit{Jacobi map} if for every $f$ and $h$ in $C^{\infty}(M_{2})$ we have that
		\begin{equation*}
			\{ f \circ \varphi, h \circ \varphi \}_{1}= \{ f,h \}_{2} \circ \varphi.
		\end{equation*} 
	
\end{definition}

The following theorem establishes that in the presence of a Jacobi action, the Jacobi bracket reduces to a Jacobi bracket on the space of orbits of the action. 

\begin{theorem}[Jacobi reduction theorem by a Lie Group action]\label{reduction:theorem}
	Let $G$ be a Lie group acting on a Jacobi manifold $(P,\{\cdot,\cdot\})$ by Jacobi maps. Suppose that $P/G$ is a smooth manifold and the projection $\pi:P\rightarrow P/G$ is a submersion. Then, there is a unique Jacobi bracket on $P/G$ denoted by $\{\cdot,\cdot\}_{red}$ called the reduced Jacobi bracket such that $\pi$ is a Jacobi map.
	
	Moreover, suppose that $H:P\rightarrow \R$ is $G$-invariant and define $h:P/G\rightarrow \R$ by $H :=h \circ \pi$. If $\phi_{t}^{X_{H}}$ and $\phi_{t}^{X_{h}}$ are the Hamiltonian flows of $H$ and $h$, respectively, then they satisfy the equation
	\begin{equation*}
		\pi \circ \phi_{t}^{X_{H}}= \phi_{t}^{X_{h}}\circ \pi.
	\end{equation*}
\end{theorem}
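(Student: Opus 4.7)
The plan is to define the reduced bracket by its characteristic property---namely that $\pi$ be a Jacobi map---and then verify the Jacobi axioms by pullback. Since $\pi$ is a surjective submersion, the pullback $\pi^{*}: C^{\infty}(P/G) \to C^{\infty}(P)$ is injective with image the subalgebra $C^{\infty}(P)^{G}$ of $G$-invariant smooth functions. Hence the unique candidate for the reduced bracket is given by
\[
\{f, h\}_{red} \circ \pi := \{f \circ \pi, h \circ \pi\}, \qquad f, h \in C^{\infty}(P/G),
\]
and it is well defined provided $\{f \circ \pi, h \circ \pi\}$ is $G$-invariant. This is forced by the hypothesis that each $\Phi_{g}$ is a Jacobi map: the identity $\Phi_{g}^{*}\{F,H\} = \{\Phi_{g}^{*}F, \Phi_{g}^{*}H\}$ applied to the $G$-invariant arguments $F = f \circ \pi$, $H = h \circ \pi$ yields the required invariance. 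Uniqueness is immediate from the same defining relation and surjectivity of $\pi$.

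Bilinearity and skew-symmetry of $\{\cdot,\cdot\}_{red}$ descend immediately from those of $\{\cdot,\cdot\}$. The Jacobi identity transfers because the pullback of the reduced Jacobiator on $(f,g,h)$ is exactly the original Jacobiator on $(f \circ \pi, g \circ \pi, h \circ \pi)$, which vanishes on $P$; injectivity of $\pi^{*}$ on invariants then gives the reduced identity. The weak Leibniz rule also passes through: since $\operatorname{supp}(f \circ \pi) = \pi^{-1}(\operatorname{supp} f)$ for a surjective submersion, one has $\pi^{-1}\bigl(\operatorname{supp}\{f,h\}_{red}\bigr) \subseteq \pi^{-1}(\operatorname{supp} f \cap \operatorname{supp} h)$, which forces $\operatorname{supp}\{f,h\}_{red} \subseteq \operatorname{supp} f \cap \operatorname{supp} h$.

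For the second assertion, I would show that $X_{H}$ and $X_{h}$ are $\pi$-related, from which $\pi \circ \phi_{t}^{X_{H}} = \phi_{t}^{X_{h}} \circ \pi$ follows by standard ODE theory. The crucial input is that the reduced Jacobi bracket alone determines its associated vector field, through $E_{red}(h) = \{1,h\}_{red}$; pulled back to $P$ this gives $E_{red}(h) \circ \pi = \{1, h \circ \pi\} = E_{P}(h \circ \pi)$, so $E_{P}$ projects onto $E_{red}$ under $\pi$. Using the general identity $X_{h}(f) = \{h,f\} + f\, E(h)$ valid on any Jacobi manifold, one then computes for every $f \in C^{\infty}(P/G)$,
\[
X_{H}(f \circ \pi) = \{H, f \circ \pi\} + (f \circ \pi)\, E_{P}(H) = \{h,f\}_{red} \circ \pi + \bigl(f\, E_{red}(h)\bigr) \circ \pi = X_{h}(f) \circ \pi,
\]
so $T\pi \circ X_{H} = X_{h} \circ \pi$ as desired. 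The main subtlety is the step just highlighted: the reduced Jacobi bracket does not come with an a priori choice of $E_{red}$, and one must extract it from the bracket itself via $E_{red}(h) = \{1,h\}_{red}$ before the Hamiltonian flows can be matched; everything else is routine pullback bookkeeping, relying only on $\pi$ being a surjective submersion and on the hypothesis that $G$ acts by Jacobi maps.
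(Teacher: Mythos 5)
Your proposal is correct and follows essentially the same route as the paper: you define $\{f,h\}_{red}$ by the requirement that $\pi$ be a Jacobi map, use the hypothesis that $G$ acts by Jacobi maps to see that $\{f\circ\pi,h\circ\pi\}$ is $G$-invariant and hence descends, let the algebraic axioms pass through the injective pullback $\pi^{*}$, extract $E_{red}$ from the bracket via $E_{red}(h)=\{1,h\}_{red}$, and match the flows through the identity $X_{H}(f\circ\pi)=\{H,f\circ\pi\}+(f\circ\pi)E_{P}(H)$ --- all of which is exactly what the paper does. The one step you handle differently is the weak Leibniz rule: you verify it directly as a support inclusion, using $\operatorname{supp}(f\circ\pi)=\pi^{-1}(\operatorname{supp}f)$ for the surjective submersion $\pi$, whereas the paper instead pushes down the generalized Leibniz rule $\{f,gh\}=g\{f,h\}+h\{f,g\}+ghE(f)$, which has the mild advantage of exhibiting the reduced pair $(\Lambda_{red},E_{red})$ explicitly rather than appealing to the equivalence of the bracket and tensorial definitions of a Jacobi structure; your topological argument is correct but, to then speak of the Hamiltonian vector field $X_{h}$ on the quotient, you implicitly rely on that equivalence (Kirillov/Lichnerowicz) to recover $(\Lambda_{red},E_{red})$ from the bracket. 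Either way the argument closes, and your observation that the subtle point is identifying $E_{red}$ before comparing flows is precisely the point the paper also isolates.
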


\begin{proof}
	Let $f,h \in C^{\infty}(P/G)$. Notice that the Jacobi bracket of the functions $f\circ \pi$ and $h\circ \pi$ is also $G$-invariant. Indeed if $\phi_{g}:P\rightarrow P$ denotes the action then
	\begin{equation*}
		\{ f\circ \pi,h\circ \pi \} \circ \phi_{g} = \{f\circ \pi\circ \phi_{g},h\circ \pi\circ \phi_{g}\}=\{f\circ \pi,h\circ \pi\},
	\end{equation*}
	where the first equality holds since $\phi_{g}$ acts by Jacobi maps and the second holds since $\pi$ is invariant under composition with the action by definition. Now, observe that a function $\Phi$ on $P$ is $G$-invariant if and only if there exists a function $\varphi:P/G\rightarrow \R$ such that $\Phi=\varphi \circ \pi$. Thus, denote by $\{f,h\}_{red}$ the function on $P/G$ defined by
	\begin{equation*}
		\{ f\circ \pi,h\circ \pi \} = \{f,h\}_{red} \circ \pi.
	\end{equation*}
	We must check that $\{f,h\}_{red}$ is indeed a Jacobi bracket. It is clearly bilinear and skew-symmetric, since $\{f\circ \pi,h\circ \pi \}$ also is. It satisfies Jacobi identity since $\{\cdot,\cdot \}$ also does. And finally, we must check if it satisfies the weak Leibniz condition. But note that if $k \in C^{\infty}(P/G)$ and $(\Lambda,E)$ is the Jacobi structure associated to $\{\cdot,\cdot\}$
	\begin{align}
			 \{f,hk\}_{red} \circ \pi =& \{ f\circ \pi,(h\circ \pi)(k \circ \pi) \} \nonumber\\
			 =& (h\circ \pi)\{ f\circ \pi,k \circ \pi \} + (k \circ \pi)\{ f\circ \pi,h \circ \pi \} \label{eqproof4.13}\\&+ (h \circ \pi)(k \circ \pi)E(f \circ \pi)\nonumber
	\end{align}


Now, since $E(\Phi)=\{1,\Phi\}$, for $\Phi\in\mathcal{C}^{\infty}(P)$, we deduce that the vector field $E$ is $\pi$-projectable and, thus, there exists a vector field $E_{red}$ over $P/G$ such that $E_{red}(f)\circ \pi = E(f\circ \pi)$, $\forall$ $f\in\mathcal{C}^{\infty}(P/G)$. Therefore, from \eqref{eqproof4.13}, it follows that $$ \{f,hk\}_{red} \circ \pi=(h\{f,k\}_{red}+k\{f,h\}_{red}+hk E_{red}(f))\circ \pi.$$

	This implies that $\{\cdot,\cdot \}_{red}$ satisfies the weak Leibniz condition. In fact, following the same argument we could prove that the Jacobi structure associated to $\{\cdot,\cdot \}_{red}$, denoted by $(\Lambda_{red},E_{red})$, by uniqueness is given by
	\begin{equation*}
			\Lambda_{red}\circ \pi = T\pi (\Lambda) \quad \text{and} \quad E_{red}\circ \pi = T \pi(E).
	\end{equation*}
	Now, note that given $p\in P$
	\begin{equation*}
		\frac{d}{dt}(f\circ \pi \circ \phi_{t}^{X_{H}}(p))=\{H, f\circ \pi \}\circ \phi_{t}^{X_{H}}(p) + (f\circ \pi)E(H)\circ \phi_{t}^{X_{H}}(p).
	\end{equation*}
	Since $H=h\circ \pi$ the later is equivalent to
	\begin{equation}\label{hamiltonian:bracket:equation}
		\frac{d}{dt}(f\circ \pi \circ \phi_{t}^{X_{H}}(p))=\{h, f \}_{red}\circ\pi\circ \phi_{t}^{X_{H}}(p) + fE_{red}(h)\circ \pi \circ \phi_{t}^{X_{H}}(p).
	\end{equation}
	Similarly, on the quotient space $P/G$, letting $[p]:=\pi(p)$, we have that
	\begin{equation*}
		\frac{d}{dt}(f \circ \phi_{t}^{X_{h}}([p]))=\{h, f \}_{red}\circ \phi_{t}^{X_{h}}([p]) + fE_{red}(h)\circ \phi_{t}^{X_{h}}([p]).
	\end{equation*}
	This last equation, uniquely defines the Hamiltonian flow $\phi_{t}^{X_{h}}$. Therefore, by comparison with equation \eqref{hamiltonian:bracket:equation}, we must have
	\begin{equation*}
		\pi \circ \phi_{t}^{X_{H}}= \phi_{t}^{X_{h}}\circ \pi.
	\end{equation*}
\end{proof}
\begin{remark}
The first part of this theorem may be deduced of a more general construction for Jacobi manifolds in the literature (see \cite{joana1989} and also \cite{ibort1997reduction}). Anyway, we have included a proof of the result to make the paper more self-contained
\end{remark}
\begin{proposition}
	The smooth submersion $\hat{J}:T^{*}G \times \R \rightarrow \mathfrak{g}^{*}\times \R$ given by
	\begin{equation*}
		\hat{J}(\mu_{g},z)=(J(\mu_{g}),z)
	\end{equation*}
	is a Jacobi map between the canonical Jacobi manifolds $T^{*}G\times\mathbb{R}$ and $\mathfrak{g}^{*}\times\mathbb{R}$, where $J$ is given by \eqref{LG:momentum}.
\end{proposition}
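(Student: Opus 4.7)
The plan is to leverage the correspondence already established in Proposition \ref{Jacobi:bracket:correspondence}: for arbitrary $f, h \in C^{\infty}(\mathfrak{g}^{*}\times\R)$, I would set $F := f\circ\hat{J}$ and $H := h\circ\hat{J}$ and show that these pullbacks are $G$-invariant functions on $T^{*}G\times\R$ whose restrictions to $\mathfrak{g}^{*}\times\R \simeq T^{*}_e G\times\R$ are precisely $f$ and $h$. Once this is in place, Proposition \ref{Jacobi:bracket:correspondence} applied to $F$ and $H$ will immediately yield $\{f\circ\hat{J}, h\circ\hat{J}\}_{can} = \{f,h\}\circ\hat{J}$, which is exactly the Jacobi map condition.

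For the $G$-invariance of $F$ and $H$, since the $\R$-factor is acted on trivially, it suffices to show that the momentum map $J:T^{*}G\to\mathfrak{g}^{*}$ is invariant under the cotangent left action $\phi_g(\alpha_h) = T^{*}_{gh}\mathcal{L}_{g^{-1}}(\alpha_h)$. This is a short pairing computation using \eqref{LG:momentum}: for any $\xi\in\mathfrak{g}$,
$$\langle J(\phi_g(\alpha_h)),\xi\rangle = \langle \alpha_h,\, T_{gh}\mathcal{L}_{g^{-1}}(T_e\mathcal{L}_{gh}(\xi))\rangle = \langle \alpha_h,\, T_e\mathcal{L}_h(\xi)\rangle = \langle J(\alpha_h),\xi\rangle,$$
using that $\mathcal{L}_{g^{-1}}\circ\mathcal{L}_{gh} = \mathcal{L}_h$. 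For the restriction statement, observe that $J$ is the identity on $T^{*}_eG$, since $J(\mu) = T^{*}_e\mathcal{L}_e(\mu) = \mu$ for $\mu\in\mathfrak{g}^{*}$; hence $F|_{\mathfrak{g}^{*}\times\R} = f$ and $H|_{\mathfrak{g}^{*}\times\R} = h$.

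Plugging these two ingredients into Proposition \ref{Jacobi:bracket:correspondence} and using that $T^{*}_e\mathcal{L}_g(\alpha_g) = J(\alpha_g)$ by definition, we get
$$\{F,H\}_{can}(\alpha_g,z) = \{f,h\}(J(\alpha_g),z) = (\{f,h\}\circ\hat{J})(\alpha_g,z),$$
as required. Smoothness of $\hat{J}$ and the submersion property are inherited directly from $J$ (left translation acts freely on $G$, so $J$ is a submersion). The step I would handle most carefully is the $G$-invariance of $J$, but this reduces to cancellation of cotangent translations and is routine; no substantial obstacle appears, since the Jacobi-bracket bookkeeping has already been packaged into Proposition \ref{Jacobi:bracket:correspondence}.
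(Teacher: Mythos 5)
Your proposal is correct and follows essentially the same route as the paper: establish that $f\circ\hat{J}$ and $h\circ\hat{J}$ are $G$-invariant (via the composition rule $\mathcal{L}_{g^{-1}}\circ\mathcal{L}_{gh}=\mathcal{L}_{h}$ for the cotangent-lifted action), note that their restrictions to $T_{e}^{*}G\times\R$ recover $f$ and $h$, and then invoke Proposition \ref{Jacobi:bracket:correspondence}. The only cosmetic difference is that you verify the invariance of $J$ by pairing against $\xi\in\mathfrak{g}$ rather than composing the cotangent maps directly, which is an equivalent computation.
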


\begin{proof}
	We must prove that for any $f$ and $h$ in $C^{\infty}(\mathfrak{g}^{*}\times \R)$ the following identity holds
	\begin{equation*}
		\{ f \circ \hat{J}, h \circ \hat{J} \}_{can}= \{ f,h \} \circ \hat{J},
	\end{equation*}
	where $\{\cdot,\cdot\}_{can}$ is the Jacobi bracket on $T^{*}G \times \R$ and $\{\cdot,\cdot\}$ is the Lie-Poisson-Jacobi bracket on $\mathfrak{g}^{*}\times \R$.
	
	Note that the function $f \circ \hat{J}$ is $G$-invariant for any function $f$. Indeed, given $\alpha_{h}\in T_{h}^{*}G$, we have that
	\begin{equation*}
		f \circ \hat{J} \circ \phi_{g}(\alpha_{h},z)=f \circ \hat{J}(T_{gh}^{*}\mathcal{L}_{g^{-1}}(\alpha_{h}),z)
	\end{equation*}
	where $\phi_{g}$ denotes the cotangent lifted action of $G$ on $T^{*}G \times \R$. Then applying the definition of $\hat{J}$, we deduce
	\begin{equation*}
		f \circ \hat{J} \circ \phi_{g}(\alpha_{h},z) = f (T_{e}^{*}\mathcal{L}_{gh}(T_{gh}^{*}\mathcal{L}_{g^{-1}}(\alpha_{h})),z).
	\end{equation*}
	Finally, applying the composition rule of cotangent maps we conclude
	\begin{equation*}
		f \circ \hat{J} \circ \phi_{g}(\alpha_{h},z) = f (T_{e}^{*}\mathcal{L}_{h}(\alpha_{h})),z) = f \circ \hat{J} (\alpha_{h},z).
	\end{equation*}
	Moreover, note that for any $f$ the restriction of $f\circ \hat{J}$ to the identity is just $f$ itself since
	\begin{equation*}
		f \circ \hat{J} (\alpha_{e},z)= f(T_{e}^{*}\mathcal{L}_{e}(\alpha_{e}),z)=f(\alpha_{e},z).
	\end{equation*}
	Hence, we may apply Proposition \ref{Jacobi:bracket:correspondence} to deduce that
	\begin{equation*}
		\{ f \circ \hat{J}, h \circ \hat{J} \}_{can} (\alpha_{h},z)= \{ f,h \} (T_{e}^{*}\mathcal{L}_{h}(\alpha_{h}),z)
	\end{equation*}
	which is equivalent to
	\begin{equation*}
		\{ f \circ \hat{J}, h \circ \hat{J} \}_{can} (\alpha_{h},z)= \{ f,h \} \circ \hat{J} (\alpha_{h},z).
	\end{equation*}
\end{proof}

\begin{theorem}
	Let $H:T^{*}G \times \R \rightarrow \R$ be a $G$-invariant function. Then the function $h:\mathfrak{g}^{*}\times \R\rightarrow \R$ given as $h:=H|_{\mathfrak{g}^{*}\times \R}$ satisfies
	\begin{equation*}
		H = h\circ \hat{J}.
	\end{equation*}
	Moreover, the contact Hamiltonian vector field $X_{H}$ on $T^{*}G\times \R$ with flow $\phi_{t}^{X_{H}}$ and the Jacobi Hamiltonian vector field $X_{h}$ on $\mathfrak{g}^{*}\times \R$ with flow $\phi_{t}^{X_{h}}$ are related by $\hat{J}$ or, equivalently, their flows satisfy the equation
	\begin{equation*}
		\hat{J} \circ \phi_{t}^{X_{H}}= \phi_{t}^{X_{h}}\circ \hat{J}.
	\end{equation*}
\end{theorem}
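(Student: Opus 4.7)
The first assertion is essentially an immediate consequence of $G$-invariance. Given any $(\alpha_g,z)\in T^{*}G\times\R$, the cotangent-lifted action sends $(\alpha_g,z)$ to $(T_{e}^{*}\mathcal{L}_{g}(\alpha_g),z)\in\mathfrak{g}^{*}\times\R$ when we translate by $g^{-1}$. Using $G$-invariance of $H$ and the definition of $\hat J$ via \eqref{LG:momentum}, we get
\begin{equation*}
H(\alpha_g,z)=H(T_{e}^{*}\mathcal{L}_{g}(\alpha_g),z)=h(J(\alpha_g),z)=h\circ\hat J(\alpha_g,z).
\end{equation*}
The restriction formula $h=H|_{\mathfrak{g}^{*}\times\R}$ coincides with the evaluation used above since on the identity fiber $\hat J$ is just the identity on $\mathfrak{g}^{*}\times\R$.

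For the dynamical part, the plan is to show that $X_H$ and $X_h$ are $\hat J$-related, i.e. $T\hat J\circ X_H=X_h\circ\hat J$, by testing both sides against an arbitrary $f\in C^{\infty}(\mathfrak{g}^{*}\times\R)$. Recall that for a Jacobi manifold $(M,\Lambda,E)$ the Hamiltonian vector field of a function $k$ acts on $f\in C^{\infty}(M)$ by $X_{k}(f)=\{k,f\}+fE(k)$. Apply this on $T^{*}G\times\R$ with $k=H=h\circ\hat J$ and the test function $f\circ\hat J$:
\begin{equation*}
X_{H}(f\circ\hat J)=\{h\circ\hat J,f\circ\hat J\}_{can}+(f\circ\hat J)\,E_{can}(h\circ\hat J).
\end{equation*}
Now invoke the previous proposition, which states precisely that $\hat J$ is a Jacobi map, to rewrite the first term as $\{h,f\}\circ\hat J$. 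For the second term observe that the Reeb vector fields on both contact/Jacobi manifolds are $\partial/\partial z$ in the $\R$-coordinate, and $\hat J$ is the identity on that coordinate, so $T\hat J(\partial/\partial z)=\partial/\partial z$; hence $E_{can}$ and $E$ are $\hat J$-related and $E_{can}(h\circ\hat J)=E(h)\circ\hat J$. Combining,
\begin{equation*}
X_{H}(f\circ\hat J)=\bigl(\{h,f\}+fE(h)\bigr)\circ\hat J=X_{h}(f)\circ\hat J,
\end{equation*}
which is the $\hat J$-relatedness stated above.

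From $T\hat J\circ X_H=X_h\circ\hat J$ the flow intertwining $\hat J\circ\phi_{t}^{X_H}=\phi_{t}^{X_h}\circ\hat J$ follows by the standard uniqueness argument for integral curves: both $t\mapsto\hat J(\phi_{t}^{X_H}(p))$ and $t\mapsto\phi_{t}^{X_h}(\hat J(p))$ are integral curves of $X_h$ starting at $\hat J(p)$. The only mildly delicate point in the whole argument is the verification that $E_{can}$ and $E$ are $\hat J$-related; fortunately this is transparent because the Jacobi structure on $\mathfrak{g}^{*}\times\R$ was constructed in Proposition \ref{prop4.6} exactly so that $E=-\partial/\partial z$, matching the Reeb vector field of $\eta_{G}$ on $T^{*}G\times\R$. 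Alternatively, the conclusion can be obtained directly from Theorem \ref{reduction:theorem} applied to the $G$-action on $T^{*}G\times\R$, by identifying the resulting reduced bracket on the orbit space with the Lie-Poisson-Jacobi bracket through Proposition \ref{Jacobi:bracket:correspondence}; I would include the direct testing argument above because it is shorter and avoids assumptions on the orbit space being smooth as a quotient.
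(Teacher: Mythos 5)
Your proof is correct. The first half ($H=h\circ\hat J$) is the same as the paper's. For the dynamical half, however, you take a genuinely different and more direct route: you establish the pointwise relation $T\hat J\circ X_{H}=X_{h}\circ\hat J$ by testing against pullbacks $f\circ\hat J$, using only the formula $X_{k}(f)=\{k,f\}+fE(k)$ (which is \eqref{hamiltonian:evolution} in the paper's sign conventions, since $E=-\partial/\partial z$), the fact that $\hat J$ is a Jacobi map, and the $\hat J$-relatedness of the two $E$ vector fields; the flow intertwining then follows from uniqueness of integral curves. The paper instead routes the argument through the quotient: it invokes the Jacobi reduction theorem (Theorem \ref{reduction:theorem}) for the $G$-action on $T^{*}G\times\R$, constructs the diffeomorphism $\psi:(T^{*}G\times\R)/G\to\mathfrak{g}^{*}\times\R$ with $\psi\circ\pi=\hat J$, identifies the reduced bracket with the Lie--Poisson--Jacobi bracket via Proposition \ref{Jacobi:bracket:correspondence}, and chains the flow relations for $\pi$ and $\psi$. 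Your version is shorter, needs no smooth-quotient hypothesis, and makes the mechanism (Jacobi maps relate Hamiltonian vector fields) explicit; the paper's version buys a reinterpretation of $\hat J$ as the quotient map up to diffeomorphism, which is of independent structural interest. One small remark: the ``mildly delicate point'' you flag is actually automatic --- since $E(f)=\{1,f\}$ for any Jacobi bracket, every Jacobi map relates the $E$ vector fields, so your coordinate verification that $T\hat J(\partial/\partial z)=\partial/\partial z$ is a sound but dispensable double-check.
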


\begin{proof}
	First note that for any $(\alpha_{g},z)\in T^{*}G\times \R$ we have that
	\begin{equation*}
		H(\alpha_{g},z)=H\circ \hat{J} (\alpha_{g},z),
	\end{equation*}
	since $H$ is $G$-invariant. Then, since $\hat{J} (\alpha_{g},z)\in \mathfrak{g}^{*}\times \R$, we have that
	\begin{equation*}
		H(\alpha_{g},z)=h \circ \hat{J} (\alpha_{g},z),
	\end{equation*}
	which proves the first statement in the theorem.
	
	The second statement is a consequence of Theorem \ref{reduction:theorem}. Indeed, note that the map
	\begin{equation*}
		\begin{split}
			\psi: (T^{*}G\times \R)/G & \longrightarrow \mathfrak{g}^{*}\times \R \\
			[\alpha_{g},z] & \mapsto \hat{J}(\alpha_{g},z)
		\end{split}
	\end{equation*}
	is a diffeomorphism. Moreover, by construction we have that $\psi \circ \pi = \hat{J}$, where $\pi: T^{*}G\times \R \rightarrow (T^{*}G\times \R)/G$ is the quotient map. Now, the map $\psi$ and the reduced Jacobi brackets on $(T^{*}G\times \R)/G$ given by Theorem \ref{reduction:theorem} induce a Jacobi bracket $\{\cdot,\cdot\}_{*}$ on $\mathfrak{g}^{*}\times \R$ that makes $\psi$ a Jacobi map so that
	\begin{equation*}
		\{f,h\}_{*}\circ \psi = \{ f\circ \psi, h \circ \psi \}_{red}, \quad f,g \in C^{\infty}(\mathfrak{g}^{*}\times \R).
	\end{equation*}
	By definition of the reduced bracket, $\pi$ is also a Jacobi map so that
	\begin{equation*}
		\{f,h\}_{*}\circ \psi \circ \pi = \{ f\circ \psi \circ \pi, h \circ \psi \circ \pi \}_{can}, \quad f,g \in C^{\infty}(\mathfrak{g}^{*}\times \R),
	\end{equation*}
	which by construction gives
	\begin{equation*}
		\{f,h\}_{*}\circ \hat{J} = \{ f\circ \hat{J}, h \circ \hat{J} \}_{can}, \quad f,g \in C^{\infty}(\mathfrak{g}^{*}\times \R).
	\end{equation*}
	But by Proposition \ref{Jacobi:bracket:correspondence} and since $f\circ \hat{J}, h \circ \hat{J}$ are $G$-invariant functions on $T^{*}G\times \R$ we must have that
	\begin{equation*}
		\{f,h\}_{*}=\{f,h\},
	\end{equation*}
	where $\{\cdot,\cdot\}$ is the canonical bracket on $\mathfrak{g}^{*}\times \R$ defined before. So, we may conclude that
	\begin{equation*}
		\hat{J} \circ \phi_{t}^{X_{H}}=\psi \circ \pi \circ \phi_{t}^{X_{H}}=\psi \circ \phi_{t}^{X_{h_{red}}} \circ \pi,
	\end{equation*}
	where we used the last statement of Theorem \ref{reduction:theorem}, relating the Hamiltonian flow of $H$ with that of $h_{red}$, the function defined by $H:=h_{red}\circ \pi$. Now observe that, since $\psi$ is a Jacobi map and $h_{red}=h\circ \psi$, we have that
	$$h_{red}\circ \pi=h\circ \hat{J}=h\circ \psi \circ \pi,$$
	the following relation holds between Hamiltonian flows
	\begin{equation*}
		\psi \circ \phi_{t}^{X_{h_{red}}}=\phi_{t}^{X_{h}}\circ \psi.
	\end{equation*}
	Therefore, we deduce
	\begin{equation*}
		\hat{J} \circ \phi_{t}^{X_{H}}= \phi_{t}^{X_{h}}\circ \psi \circ \pi=\phi_{t}^{X_{h}}\circ \hat{J}.
	\end{equation*}
\end{proof}

\begin{example}
	
    Consider the Lie group $SO(3)$ where the left action is given by $\mathcal{L}_{R}(R_{0})=RR_{0}$ for any $R,R_{0}\in SO(3)$. Therefore, $T_{e}\mathcal{L}_{R}(\xi)=R\xi$ for $\xi \in \mathfrak{so}(3)$ and $T_{e}^{*}\mathcal{L}_{R}(\mu_{R})=R^{T}\mu_{R}$, where $\mu_{R}\in T_{R}^{*}SO(3)$. Thus, we introduce the Jacobi map $\hat{J}:T^{*}SO(3)\times \R \rightarrow \mathfrak{so}(3)^{*}\times \R$ given by
    $$\hat{J}(\mu_{R},z) = (R^{T}\mu_{R}, z).$$

    Let the Hamiltonian $H:T^{*}SO(3)\times \R\rightarrow \R$ be
    $$H(\mu_{R},z) = \frac{1}{2}\tr(\mu_{R}^{T}\mathbb{J}\mu_{R}) + \gamma z.$$

    Let $h:\mathfrak{so}^{*}(3)\times\mathbb{R}  \rightarrow \mathbb{R}$ be given by $h(\mu,z)=\frac{1}{2}\mu^{T}\mathbb{J}\mu + \gamma z$, where we are identifying $\mathfrak{so}^{*}(3)$ with $\R^{3}$ through the map $\Breve{(\cdot)}:\mathbb{R}^{3}\rightarrow \mathfrak{so}^{*}(3)$ satisfying $\langle \Breve{\Pi}, \xi \rangle = \Pi_{i}\xi_{i}$, where $\Pi = (\Pi_{1},\Pi_{2},\Pi_{3})\in \R^{3}$ and $\xi \in \mathfrak{so}(3)$ is identified with $(\xi_{1},\xi_{2},\xi_{3})$ through the hat map given in Example \ref{so3:example}. Then, it is straightforward to check that $h \circ \hat{J} (\mu_{R},z) = H(\mu_{R}, z)$.
\end{example}

\section{Euler-Poincar\'e-Herglotz and Lie-Poisson-Jacobi equations with symmetry breaking}

The purpose of this section is to obtain the reduced Euler-Poincar\'e-Herglotz equations as a generalization of the procedure   followed in \cite{holm1998euler}. Let $X$ be a finite-dimensional vector space. Consider a Lagrangian system evolving on a Lie group $G$ and assume the Lagrangian is not symmetry invariant. At the same time, we consider a representation of $G$ on a dual vector space $X^{*}$.   Coupling to the Lagrangian a parameter depending on vectors in $X^{*}$, we obtain a symmetry invariant Lagrangian function under the action of $G$. Loosely speaking, the Lagrangian system is coupled with vectors in $X^{*}$ that are acted by a Lie group symmetry. The associated action considered at this stage restores the full Lie group symmetry and allows us to apply the semi-direct product reduction theory  \cite{marsden1984semidirect}, \cite{holm1998euler} (see also \cite{gay2010reduction}, \cite{gay2011clebsch}), to obtain the corresponding Euler-Poincar\'e-Herglotz system on the semi-direct product Lie algebra $(\mathfrak{g}\ltimes X^{*})\times\mathbb{R}$. This gives rise to a new reduced system that finds no analogs in classical reduced-order models in contact mechanical systems.

Assume (i) the Lagrangian function $L_{\alpha_0}:TG\times\mathbb{R}\to\mathbb{R}$ is not $G$-invariant but depends on a parameter $\alpha_0$ that may be considered to be an element of the dual space $X^{*}$, where $X$ is a vector space. Hence, we define the extended Lagrangian function as $L_{\textnormal{ext}}: TG\times\mathbb{R}\times X^{*}\to\mathbb{R}$, with $L_{\textnormal{ext}}(\cdot,\alpha_{0}) = L_{\alpha_0}$.
Assume (ii) there is a left representation $\rho$ of $G$ on $X$, i.e., $\rho: G\to\mathrm{GL}(X)$ is a Lie group morphism. Hence, there is a left representation $\rho^{*}$ of $G$ on $X^{*}$, i.e., $\rho^{*}: G\to\mathrm{GL}(X^{*})$, defined using adjoint linear maps. Indeed, for any $g \in G$, we set $\rho^*(g)=\rho({g^{-1}})^{*}\in\mathrm{GL}(X^{*})$, where the right-hand side is the adjoint map of $\rho({g^{-1}})\in\mathrm{GL}(X)$, i.e., for any $x\in X$ and $\alpha \in X^{*}$
\begin{align}
\langle\rho({g^{-1}})^{*}(\alpha),x\rangle = \langle\alpha,\rho({g^{-1}})(x)\rangle.\nonumber
\end{align}
From now, we will denote the image of $g$ by $\rho$ and $\rho^{*}$ by $\rho_{g}$ and $\rho^{*}_{g}$, respectively, and $\rho^{*}$ will be called the adjoint representation.

The adjoint representation induces a left action of $G$ on the extended phase space $TG\times\mathbb{R} \times X^{*}$ defined as
\begin{align}
\Phi: G \times (TG \times\mathbb{R}\times X^{*})&\longrightarrow TG\times\mathbb{R}\times  X^{*},\nonumber\\
(g,(v_h,z,\alpha))&\longmapsto(\hat{\mathcal{L}}_{g}(v_h,z),\rho_{g}^{*}(\alpha)),\label{eq_phi}
\end{align}
where $\hat{\mathcal{L}}_{g}$ has been defined in \eqref{lhat}.
Assume  also that (iii) the extended Lagrangian function is $G$-invariant under \eqref{eq_phi}, i.e., $L_{\textnormal{ext}}\circ\Phi_{g} = L_{\textnormal{ext}}$, for any $g \in G$, or more explicitly {$$L_{\textnormal{ext}}(\hat{\mathcal{L}}_{g}(v_h,z),\rho^{*}_{g}(\alpha)) = L_{\textnormal{ext}}(v_h,z,\alpha),$$} for any $v_h\in TG$ and $\alpha\in X^{*}$. 

We can now obtain the reduced augmented Lagrangian $\ell_{ext}: \mathfrak{g}\times\mathbb{R}\times X^{*}\to\mathbb{R}$, which is given by
\begin{align}
\ell_{ext}(\xi,z,\alpha) := L_{ext}(e,\xi,z,\alpha),
\end{align} 
Critical points of the reduced variational principle for  $\ell_{ext}$ satisfy the following Euler-Poincar{\'e}-Herglotz equations (see Theorem \ref{th5.1} below for a proof)
\begin{align}
\frac{d}{dt}\left(\frac{\delta \ell_{ext}}{\delta \xi}\right) &= \hbox{ad}^{*}_{\xi}\left(\frac{\delta \ell_{ext}}{\partial \xi}\right)+\mathbf{J}_{X}\left(\frac{\delta \ell_{\textnormal{ext}}}{\delta\alpha},\alpha\right)+\frac{\delta \ell_{\textnormal{ext}}}{\delta \xi}\frac{\partial \ell_{\textnormal{ext}}}{\partial z} ,\label{eq_ep}\\
\dot{\alpha} &= {-\rho_{\xi}^{\prime*}(\alpha), \quad \alpha(0) = \alpha_{0}},\label{eq_diff_ep} \\
 \dot{z} & =\ell_{\textnormal{ext}}
\end{align}
where $\mathbf{J}_{X}: T^{*}X\cong X\times X^{*}\to\mathfrak{g}^{*}$ is the momentum map corresponding to the left action of $G$ on $X$ defined using the left representation $\rho$ of $G$ on $X$, i.e., for any $x\in X$, $\xi\in \mathfrak{g}$ and $\alpha\in X^{*}$
\begin{align}
\langle\mathbf{J}_{X}(x,\alpha),\xi\rangle = \langle\alpha,\xi_{X}(x)\rangle,\label{eq_mp}
\end{align}
with $\xi_{X}$ being the infinitesimal generator of the left action of $G$ on $X$ and $\rho^{\prime*}: \mathfrak{g}\to\mathfrak{gl}(X^{*})$ is defined as the adjoint of $\rho^{\prime}$, which is the infinitesimal representation induced by the left representation $\rho$ of $G$ on $X$. Note that the solution to \eqref{eq_diff_ep} is given by {$\alpha(t) = \rho^{*}_{g^{-1}(t)}(\alpha_{0})$}. Also, note that using the notation of \cite{holm1998euler}, \cite{holm2009geometric} we have $\mathbf{J}_{X}(x,\alpha) = x\diamond\alpha$. To summarize, we have the following theorem.

\begin{theorem}\label{th5.1}
	{Let $g:I \rightarrow G$ and $z:I\rightarrow\R$ be smooth curves on an interval $I$. Define the curves $\xi(t)=T_{g(t)}\mathcal{L}_{g^{-1}(t)}(\dot{g}(t))$ and $\alpha(t)=\rho^{*}_{g^{-1}(t)}(\alpha_{0})$, for $\alpha_{0}\in X^{*}$ Let also $g_{0}=g(0)\in G$.} Under assumptions (i)-(iii), the following are equivalent:
	\begin{enumerate}
        \item {The variational principle
		\begin{equation*}
		\delta \int_{0}^{h} L_{\alpha_0}(g(t), \dot{g}(t),z(t)) \ dt=0, \quad \dot{z}=L_{\alpha_0}(g(t), \dot{g}(t),z(t))
		\end{equation*}
		holds for variations of $g$ with fixed endpoints.}
		\item {The curves $g:I \rightarrow G$ and $z:I\rightarrow\R$ satisfy Euler-Poincar{\'e}-Herglotz equations for $L_{\alpha_{0}}$;}
        \item The reduced constrained variational principle
		\begin{equation}\label{reduced:principle2}
		\delta \int_{0}^{h} \ell_{\textnormal{ext}}(\xi(t),z(t),\alpha(t)) \ dt=0, \quad \dot{z}=\ell_{\textnormal{ext}}(\xi(t),z(t), \alpha(t))
		\end{equation}
		holds using variations of $\xi$ and $\alpha$ of the form
            \begin{align}
                \delta{\xi} &= \dot{\eta}+\hbox{ad}_{\xi}\eta,\nonumber\\
                \delta{\alpha} &= {-\rho_\eta^{\prime*}(\alpha)}.\nonumber
            \end{align}
		\item The curves $\xi(t)=T_{g(t)}\mathcal{L}_{g^{-1}(t)}(\dot{g}(t))$ and $z$ satisfy the \textit{Euler-Poincar\'e-Herglotz} equations 
		  \begin{align}
                \frac{d}{dt}\left(\frac{\delta \ell_{ext}}{\delta \xi}\right) &= \hbox{ad}^{*}_{\xi}\left(\frac{\delta \ell_{ext}}{\partial \xi}\right)+\mathbf{J}_{X}\left(\frac{\delta \ell_{\textnormal{ext}}}{\delta\alpha},\alpha\right)+\frac{\delta \ell_{\textnormal{ext}}}{\delta \xi}\frac{\partial \ell_{\textnormal{ext}}}{\partial z},\label{Herglotz:Poincare1}\\
                \dot{\alpha} &= {-\rho_{\xi}^{\prime*}(\alpha)}, \ \ \alpha(0) = \rho^{*}_{g_{0}^{-1}}(\alpha_0), \hspace{5pt} \dot{z}=\ell_{\textnormal{ext}}\label{Herglotz:Poincare2}
            \end{align}
	   \end{enumerate}
\end{theorem}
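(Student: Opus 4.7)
The plan is to adapt the variational argument used in Theorem \ref{4points} to the symmetry-broken setting, tracking the extra dependence on the advected parameter $\alpha\in X^{*}$. The chain of equivalences splits naturally as $(1)\Leftrightarrow (2)$, $(1)\Leftrightarrow (3)$ and $(3)\Leftrightarrow (4)$. The first is just the Herglotz variational principle on the contact manifold $TG\times \R$ applied to the (non-invariant) Lagrangian $L_{\alpha_{0}}$, so it is covered by the general theory recalled in Section \ref{contact:lagrangian} (and used already in the proof of Theorem \ref{4points}).

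For $(1)\Leftrightarrow (3)$, I would exploit the $G$-invariance of $L_{\textnormal{ext}}$ under $\Phi$ to rewrite the action along a curve $g(\cdot)$ as
\begin{equation*}
L_{\alpha_{0}}(g(t),\dot{g}(t),z(t))=L_{\textnormal{ext}}\bigl(\hat{\mathcal{L}}_{g(t)^{-1}}(g(t),\dot{g}(t),z(t)),\rho^{*}_{g(t)^{-1}}(\alpha_{0})\bigr)=\ell_{\textnormal{ext}}(\xi(t),z(t),\alpha(t)),
\end{equation*}
with $\xi(t)=g(t)^{-1}\dot{g}(t)$ and $\alpha(t)=\rho^{*}_{g(t)^{-1}}(\alpha_{0})$, so that the two action functionals and the two Herglotz constraints $\dot{z}=L_{\alpha_{0}}=\ell_{\textnormal{ext}}$ coincide. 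For an arbitrary variation $c(t,s)$ of $g$ with fixed endpoints, introduce $\eta(t,s)=c^{-1}\partial_{s}c$ and $\xi(t,s)=c^{-1}\partial_{t}c$; exactly as in Theorem \ref{left:trivialized:EL:theorem} one obtains $\delta\xi=\dot{\eta}+[\xi,\eta]$ and $\eta$ vanishes at $t=0,h$. For the advected quantity, differentiating $\alpha(t,s)=\rho^{*}_{c(t,s)^{-1}}(\alpha_{0})$ at $s=0$ and using the representation property together with the definition of $\rho'^{*}$ gives $\delta\alpha=-\rho'^{*}_{\eta}(\alpha)$. Conversely, any pair $(\eta,\xi,\alpha)$ with these compatibility relations and $\eta(0)=\eta(h)=0$ is realized by such a variation, so the two variational principles in $(1)$ and $(3)$ admit exactly the same admissible variations.

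The heart of the argument is $(3)\Leftrightarrow (4)$, which I would prove by copying verbatim the $\sigma$-integrating-factor computation from Theorem \ref{left:trivialized:EL:theorem}, now with a new contribution coming from $\delta\alpha$. Writing $\sigma(t)=\exp\bigl(-\int_{0}^{t}\tfrac{\partial \ell_{\textnormal{ext}}}{\partial z}\,ds\bigr)$, the Herglotz constraint $\dot{z}=\ell_{\textnormal{ext}}$ yields
\begin{equation*}
\left.\frac{d}{ds}\right|_{s=0}\!\!\int_{0}^{h}\ell_{\textnormal{ext}}\,dt=\frac{1}{\sigma(h)}\int_{0}^{h}\sigma(t)\left[\left\langle\frac{\delta \ell_{\textnormal{ext}}}{\delta\xi},\dot{\eta}+\mathrm{ad}_{\xi}\eta\right\rangle+\left\langle\frac{\delta \ell_{\textnormal{ext}}}{\delta\alpha},-\rho'^{*}_{\eta}(\alpha)\right\rangle\right]dt.
\end{equation*}
The crucial step is to convert the $\delta\alpha$ term into a $\mathfrak{g}^{*}$-pairing with $\eta$: by the defining property \eqref{eq_mp} of the momentum map,
\begin{equation*}
\left\langle\frac{\delta \ell_{\textnormal{ext}}}{\delta\alpha},-\rho'^{*}_{\eta}(\alpha)\right\rangle=-\left\langle\alpha,\rho'_{\eta}\!\left(\tfrac{\delta \ell_{\textnormal{ext}}}{\delta\alpha}\right)\right\rangle=-\left\langle\mathbf{J}_{X}\!\left(\tfrac{\delta \ell_{\textnormal{ext}}}{\delta\alpha},\alpha\right),\eta\right\rangle,
\end{equation*}
possibly up to a sign depending on the convention for $\xi_{X}$. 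Integrating by parts the $\dot{\eta}$ term (the boundary contributions vanish because $\eta$ vanishes at the endpoints) and using $\dot{\sigma}=-\sigma\,\partial\ell_{\textnormal{ext}}/\partial z$, the expression inside the bracket becomes
\begin{equation*}
\sigma(t)\!\left[\mathrm{ad}^{*}_{\xi}\frac{\delta \ell_{\textnormal{ext}}}{\delta\xi}-\frac{d}{dt}\frac{\delta \ell_{\textnormal{ext}}}{\delta\xi}+\frac{\partial \ell_{\textnormal{ext}}}{\partial z}\frac{\delta \ell_{\textnormal{ext}}}{\delta\xi}+\mathbf{J}_{X}\!\left(\tfrac{\delta \ell_{\textnormal{ext}}}{\delta\alpha},\alpha\right)\right]\!\eta.
\end{equation*}
Since $\sigma$ is nowhere zero and $\eta$ is otherwise arbitrary, the fundamental lemma of the calculus of variations yields \eqref{Herglotz:Poincare1}, while the curve equation $\dot{\alpha}=-\rho'^{*}_{\xi}(\alpha)$ in \eqref{Herglotz:Poincare2} is obtained by simply differentiating $\alpha(t)=\rho^{*}_{g(t)^{-1}}(\alpha_{0})$ and the $\dot{z}$ equation is built-in.

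The main obstacle I expect is bookkeeping of sign and duality conventions in the $\delta\alpha$ computation, specifically showing that the contribution of $\delta\alpha$ to the first variation produces $+\mathbf{J}_{X}(\delta\ell_{\textnormal{ext}}/\delta\alpha,\alpha)$ and not its negative in \eqref{Herglotz:Poincare1}; this is entirely a matter of reconciling how $\rho$, $\rho^{*}$, $\rho'$, $\rho'^{*}$, $\xi_{X}$ and $\mathbf{J}_{X}$ are normalized in assumptions (ii)--(iii). Once this is pinned down, the rest is a direct transcription of the Herglotz computation in Theorem \ref{left:trivialized:EL:theorem} with the single extra term; no new ideas are needed.
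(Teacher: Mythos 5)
Your proposal is correct and follows essentially the same route as the paper's proof: the same three-way decomposition into $(1)\Leftrightarrow(2)$, $(1)\Leftrightarrow(3)$, $(3)\Leftrightarrow(4)$, the same use of $G$-invariance of $L_{\textnormal{ext}}$ to identify the two integrands and match the admissible variations, and the same $\sigma$-integrating-factor computation with the momentum-map identity absorbing the $\delta\alpha$ contribution. The sign you flag as the main obstacle is resolved exactly as you suspect, by the normalization of $\rho^{\prime *}$: since $\rho^{*}_{g}:=\rho(g^{-1})^{*}$, the infinitesimal generator of the dual representation satisfies $\langle\rho^{\prime *}_{\eta}(\alpha),x\rangle=-\langle\alpha,\rho^{\prime}_{\eta}(x)\rangle$, so that $\bigl\langle-\rho^{\prime *}_{\eta}(\alpha),\tfrac{\delta\ell_{\textnormal{ext}}}{\delta\alpha}\bigr\rangle=+\bigl\langle\mathbf{J}_{X}\bigl(\tfrac{\delta\ell_{\textnormal{ext}}}{\delta\alpha},\alpha\bigr),\eta\bigr\rangle$, which is precisely the convention the paper invokes at that step to obtain the $+\mathbf{J}_{X}$ term in \eqref{Herglotz:Poincare1}.
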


\begin{proof}

The proof follows arguments similar to the ones given in Theorem \ref{left:trivialized:EL:theorem} and  Theorem \ref{4points}. The equivalence of items $1.$ and $2.$ is general for all contact manifolds, so in particular to $TG\times \R$ (see \cite{anahory2021geometry}).

{We will show that the first variational principle  implies the second constrained variational principle, that is, we prove the equivalence between $1.$ and $3$. Notice that, since $L_{\textnormal{ext}}$ is $G$-invariant under $\Phi_g$, and $\alpha = \rho^{*}_{g^{-1}}(\alpha_{0})$ the integrand in the first variational principle is equal to the integrand in the second constrained variational principle. Indeed, using the defintion of $\ell_{ext}$ and the $G$-invariance, we get
$$\ell_{\textnormal{ext}}(\xi,z,\alpha) = L_{ext}(e,\xi,z,\alpha)=L_{ext}(e,T_{g}\mathcal{L}_{g^{-1}}(\dot{g}),z,\rho^{*}_{g^{-1}}(\alpha_{0}))=L_{\alpha_{0}}(g,\dot{g}, z),$$
where we dropped the time dependence to ease the notation.}

Moreover, all variations of $g$ vanishing at the endpoints induce and are induced by variations of $\xi$ of the form $\delta{\xi} = \dot{\eta}+\ad_{\xi}\eta$, with $\eta(0) = \eta(T) = 0$.

Recall that the infinitesimal variations $\eta$ are given by $\eta = T_{g}\mathcal{L}_{g^{-1}}(\delta{g})$. {This fact is used to write the variations of the curve $\alpha$ as $\delta{\alpha} = -\rho_\eta^{\prime*}(\alpha)$ (see \cite{holm2009geometric}). Hence, the first variational principle implies the second constrained variational principle.}

To conclude, we show the equivalence between $3$. and $4$. Indeed by using the definition of the auxiliar variable $\sigma(t)$, the calculations for Theorem \ref{left:trivialized:EL:theorem}, integration by parts, and the fact that
{\begin{align}
&\left\langle\delta{\alpha},\frac{\delta\ell_{\textnormal{ext}}}{\delta\alpha}\right\rangle = \left\langle -\rho_\eta^{\prime*}(\alpha),\frac{\delta \ell_{\textnormal{ext}}}{\delta\alpha}\right\rangle = \left\langle \alpha,\rho_\eta^{\prime}\left(\frac{\delta \ell_{\textnormal{ext}}}{\delta\alpha}\right) \right\rangle
= \Big\langle\mathbf{J}_{X}\left(\frac{\delta \ell_{\textnormal{ext}}}{\delta\alpha},\alpha\right),\eta\Big\rangle,\nonumber
\end{align}
where we used the infinitesimal version of the definition of the adjoint representation,i.e.,
$$\left\langle \rho_\eta^{\prime*}(\alpha),x\right\rangle = -\left\langle \alpha,\rho_\eta^{\prime}\left(x\right) \right\rangle, \quad \forall x\in X, \alpha\in X^{*}, \eta\in \mathfrak{g},$$}
we deduce that the reduced Herglotz variational principle holds if and only if
\begin{equation}
\sigma(t)\left[\text{ad}_{\xi}^{*}\frac{\delta l}{\partial \xi}-\frac{d}{dt}\left(\frac{\delta l}{\delta \xi}\right)+\mathbf{J}_{X}\left(\frac{\delta \ell_{\textnormal{ext}}}{\delta\alpha},\alpha\right)+\frac{\partial l}{\partial z}\frac{\delta l}{\delta \xi}\right] = 0.
\end{equation}  By taking the time derivative of $\alpha$, {we get
$\dot{\alpha} = -\rho_{\xi}^{\prime*}(\alpha), \hspace{5pt} \alpha(0) = \rho^{*}_{g_{0}^{-1}}(\alpha_0)$} and, since the auxiliary function $\sigma$ is nowhere zero, the result follows.
\end{proof}

There is a special cases of Theorem \ref{th5.1} that is of particular interest and we state it as corollary.

\begin{corollary}\label{cor5}
Let $X = \mathfrak{g}$ and $\rho$ be the adjoint representation of $G$ on $X$, i.e., $\rho_{g} = \hbox{Ad}_{g}$, for any $g\in G$. Then, the Euler--Poincar{\'e}-Herglotz equations \eqref{Herglotz:Poincare1}--\eqref{Herglotz:Poincare2} give the following equations
\begin{align}
\frac{d}{dt}\left(\frac{\delta\ell_{\textnormal{ext}}}{\delta \xi}\right) &= \ad^{*}_{\xi}\left(\frac{\delta \ell_{\textnormal{ext}}}{\delta \xi}\right)-\ad_{\frac{\delta \ell_{\textnormal{ext}}}{\delta\alpha}}^{*}\alpha+\frac{\delta \ell_{\textnormal{ext}}}{\delta \xi}\frac{\partial \ell_{\textnormal{ext}}}{\partial z},\nonumber\\
\dot{\alpha} &= {-\ad_{\xi}^{*}\alpha, \hspace{5pt} \alpha(0) = \Ad_{g_{0}^{-1}}^{*}\alpha_{0}},\,\dot{z}=\ell_{\textnormal{ext}}.\nonumber
\end{align}
\end{corollary}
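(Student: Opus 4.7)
The plan is to obtain this as a direct specialization of Theorem~\ref{th5.1}. Only two ingredients from the general statement depend on the choice of representation, namely the momentum map $\mathbf{J}_{X}$ from \eqref{eq_mp} and the dual infinitesimal representation $\rho^{\prime*}$; everything else in the conclusions of Theorem~\ref{th5.1} is preserved verbatim. The strategy is therefore to compute each of these two objects explicitly when $X=\mathfrak{g}$ and $\rho=\Ad$, and then simply read off the resulting equations.

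First I would compute the infinitesimal form of $\rho=\Ad$. For any $\xi,x\in\mathfrak{g}$,
\begin{equation*}
\xi_{X}(x)=\left.\frac{d}{dt}\right|_{t=0}\Ad_{\exp(t\xi)}(x)=\ad_{\xi}(x)=[\xi,x],
\end{equation*}
so that $\rho^{\prime}_{\xi}=\ad_{\xi}$ and its transpose is $\rho^{\prime*}_{\xi}=\ad_{\xi}^{*}$. Substituting this into \eqref{Herglotz:Poincare2} immediately gives $\dot{\alpha}=-\ad_{\xi}^{*}\alpha$, while the initial condition $\alpha(0)=\rho^{*}_{g_{0}^{-1}}(\alpha_{0})=(\Ad_{g_{0}})^{*}\alpha_{0}$ coincides with $\Ad_{g_{0}^{-1}}^{*}\alpha_{0}$ under the standard convention in which $\Ad^{*}$ denotes the coadjoint action viewed as a left $G$-action on $\mathfrak{g}^{*}$. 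This yields the second line of the corollary.

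Next I would rewrite the momentum map. Using \eqref{eq_mp} and antisymmetry of the Lie bracket, for any $\eta\in\mathfrak{g}$,
\begin{equation*}
\langle \mathbf{J}_{X}(x,\alpha),\eta\rangle=\langle\alpha,[\eta,x]\rangle=-\langle\alpha,\ad_{x}\eta\rangle=-\langle\ad_{x}^{*}\alpha,\eta\rangle,
\end{equation*}
from which $\mathbf{J}_{X}(x,\alpha)=-\ad_{x}^{*}\alpha$. Setting $x=\delta\ell_{\textnormal{ext}}/\delta\alpha$ and substituting into \eqref{Herglotz:Poincare1} converts the momentum-map term into $-\ad^{*}_{\delta\ell_{\textnormal{ext}}/\delta\alpha}\alpha$, which is precisely the first equation in the corollary; the contact equation $\dot{z}=\ell_{\textnormal{ext}}$ carries over unchanged.

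The routine calculations above are short, but the main obstacle is bookkeeping of sign conventions. The quantity $\ad^{*}$ is defined throughout the paper as the transpose of $\ad$, whereas $\Ad^{*}$ in the corollary denotes the coadjoint action (a genuine left action of $G$ on $\mathfrak{g}^{*}$); the minus sign in $\dot{\alpha}=-\ad_{\xi}^{*}\alpha$ is exactly what reconciles these two conventions, since differentiating $t\mapsto\Ad^{*}_{\exp(-t\xi)}$ at $t=0$ produces $-\ad_{\xi}^{*}$. Verifying this consistency, together with the antisymmetry sign in the bracket identity for $\mathbf{J}_{X}$, is the one step where it is easiest to go astray and therefore the one I would check most carefully.
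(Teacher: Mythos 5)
Your proposal is correct and follows essentially the same route as the paper: identify $\rho'_{\xi}=\ad_{\xi}$, compute $\mathbf{J}_{X}(x,\alpha)=-\ad_{x}^{*}\alpha$ from \eqref{eq_mp} via antisymmetry of the bracket, and substitute into \eqref{Herglotz:Poincare1}--\eqref{Herglotz:Poincare2}. Your additional care with the $\Ad^{*}$ versus $\ad^{*}$ sign conventions goes slightly beyond what the paper writes out, but the substance of the argument is identical.
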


\begin{proof}
We first begin by noting that $\alpha\in X^{*} = \mathfrak{g}^{*}$ and $\rho^{*}$ is the coadjoint representation of $G$ on $X^{*}$, i.e., $\rho^{*}_{g} = \Ad_{g}^{*}$, for any $g\in G$. We also have $\rho_{\xi}^{\prime} = \ad_{\xi}$, for any $\xi\in\mathfrak{g}$ and it follows that $\xi_{X}(x) = \ad_{\xi}x$, for any $x\in X$. From \eqref{eq_mp}, we have 
\begin{align}
\langle\mathbf{J}_{X}(x,\alpha),\xi\rangle &= \langle\alpha,\ad_{\xi}x\rangle\nonumber\\
&= \langle\alpha,-\ad_{x}\xi\rangle\nonumber\\
&= \langle-\ad_{x}^{*}\alpha,\xi\rangle,\nonumber
\end{align}
which gives $\mathbf{J}_{X}(x,\alpha) = -\ad_{x}^{*}\alpha$.
\end{proof}


\subsection{Reduced Legendre transformation}
If we assume that the reduced Lagrangian $\ell_{\textnormal{ext}}$ is hyper-regular, then we can obtain the reduced Hamiltonian $h_{\textnormal{ext}}: \mathfrak{g}^{*}\times\mathbb{R}\times X^{*}\to\mathbb{R}$ (by using the reduced Legendre transformation) given by
\begin{align}
h_{\textnormal{ext}}(\mu,z,\alpha) = \langle\mu,\xi\rangle-\ell_{\textnormal{ext}}(\xi,z,\alpha),\nonumber
\end{align}
where $\mu = \frac{\partial\ell_{\textnormal{ext}}}{\partial \xi}$, with $\mu(\cdot)\in C^{1}([0,T],\mathfrak{g}^{*})$. The Euler--Poincar{\'e}-Herglotz equations \eqref{Herglotz:Poincare1}--\eqref{Herglotz:Poincare2} can now be written as the Lie--Poisson--Jacobi equations, which are given by
\begin{align}
\dot{\mu} & =\emph{ad}^{*}_{\frac{\delta h_{\textnormal{ext}}}{\delta \mu}}\mu-\mu\frac{\partial h_{\textnormal{ext}}}{\partial z}-\mathbf{J}_{X}\left(\frac{\delta h_{\textnormal{ext}}}{\delta\alpha},\alpha\right),\\	\dot{z} &= \langle \mu,\frac{\delta h_{\textnormal{ext}}}{\delta \mu} \rangle-h_{\textnormal{ext}}(\mu,z,\alpha).\label{eq_lp}\\
\dot{\alpha} &= {-\rho^{\prime*}_{\frac{\delta h_{\textnormal{ext}}}{\delta\alpha}}(\alpha), \hspace{5pt} \alpha(0) = \rho^{*}_{g_{0}^{-1}}(\alpha_{0})}.\label{eq_diff_lp}
\end{align}


\subsection{Lie-Poisson-Jacobi equations for systems with symmetry breaking}

We can define a Jacobi bracket on the space of smooth functions on $\mathfrak{g}^{*}\times\mathbb{R}\times X^{*}$ with respect to which the previous equations are just the Hamiltonian equations. Consider the bracket
\begin{equation*}
    \begin{split}
        \{f,g\}_{X^{*}} (\mu,z,\alpha)=\left\langle \mu,\left[ \frac{\delta f}{\delta \mu},\frac{\delta g}{\delta \mu} \right] \right\rangle & + \left\langle \mu,\frac{\delta f}{\delta \mu} \right\rangle \frac{\partial g}{\partial z} - \left\langle \mu,\frac{\delta g}{\delta \mu} \right\rangle \frac{\partial f}{\partial z} \\
        & + \left\langle \alpha, \frac{\delta f}{\delta \mu}\frac{\delta g}{\delta \alpha} - \frac{\delta f}{\delta \alpha}\frac{\delta g}{\delta \mu}\right\rangle
    \end{split}
\end{equation*}
This bracket falls into the same definition of Lie-Poisson-Jacobi bracket as the one given in Definition \ref{LPJ-bracket-def}. In this case, the associated Lie algebra is $\mathfrak{g}\ltimes X$ (see \cite{holm2009geometric} for more details on the subject of semi-direct products).

\begin{proposition}
    The Hamiltonian equations with respect to the Lie-Poisson-Jacobi bracket $\{\cdot,\cdot\}_{X^{*}}$ and the Hamiltonian function $h$ on $\mathfrak{g}^{*}\times\mathbb{R}\times X^{*}$ are
    \begin{align*}
        \dot{\mu} & =\emph{ad}^{*}_{\frac{\delta h}{\delta \mu}}\mu-\mu\frac{\partial h}{\partial z}-\mathbf{J}_{X}\left(\frac{\delta h}{\delta\alpha},\alpha\right),\\	\dot{z} &= \langle \mu,\frac{\delta h}{\delta \mu} \rangle-h(\mu,z,\alpha).\label{eq_lp}\\
        \dot{\alpha} &= {-\rho^{\prime*}_{\frac{\delta h}{\delta \mu}}(\alpha)}.
\end{align*}
\end{proposition}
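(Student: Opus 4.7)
The plan is to imitate the proof of the analogous Lie--Poisson--Jacobi proposition earlier in Section 5.2, generalizing it by keeping track of the additional $X^{*}$-factor. Let $(\mu(t),z(t),\alpha(t))$ be an integral curve of the Hamiltonian vector field $X_h$ associated with the Jacobi bracket $\{\cdot,\cdot\}_{X^{*}}$. For any smooth $f:\mathfrak{g}^{*}\times\mathbb{R}\times X^{*}\to\mathbb{R}$, differentiating along the curve gives
\begin{equation*}
\dot{f} \;=\; \Big\langle \dot{\mu},\tfrac{\delta f}{\delta\mu}\Big\rangle \;+\; \dot{z}\,\tfrac{\partial f}{\partial z} \;+\; \Big\langle \dot{\alpha},\tfrac{\delta f}{\delta\alpha}\Big\rangle,
\end{equation*}
while the Jacobi Hamiltonian evolution equation \eqref{hamiltonian:evolution} applied to the bracket of this section yields $\dot f = \{h,f\}_{X^{*}} - f\,\partial h/\partial z$. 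Since $f$ is arbitrary, the idea is to expand the right-hand side, rewrite it as pairings with $\delta f/\delta\mu$, $\partial f/\partial z$ and $\delta f/\delta\alpha$, and then read off the three equations by matching.

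The expansion proceeds in two blocks. The terms not involving $\alpha$ reproduce verbatim the computation of the unbroken case: using $\langle\mu,[\xi,\eta]\rangle = \langle\mathrm{ad}^{*}_{\xi}\mu,\eta\rangle$ one obtains
\begin{equation*}
\Big\langle \mathrm{ad}^{*}_{\frac{\delta h}{\delta\mu}}\mu - \mu\tfrac{\partial h}{\partial z},\tfrac{\delta f}{\delta\mu}\Big\rangle \;+\; \Big(\big\langle\mu,\tfrac{\delta h}{\delta\mu}\big\rangle - h\Big)\tfrac{\partial f}{\partial z},
\end{equation*}
exactly as in the proof of the analogous proposition on $\mathfrak{g}^{*}\times\mathbb{R}$. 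The genuinely new block is the cross-term $\langle\alpha,\,\rho'_{\delta h/\delta\mu}(\delta f/\delta\alpha) - \rho'_{\delta f/\delta\mu}(\delta h/\delta\alpha)\rangle$ in $\{h,f\}_{X^{*}}$.

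To dispose of this block I will invoke two identities already established earlier in the paper. First, the infinitesimal adjoint identity $\langle\rho^{\prime*}_{\xi}(\alpha),x\rangle = -\langle\alpha,\rho'_{\xi}(x)\rangle$ from the proof of Theorem \ref{th5.1} converts the first summand into $-\langle\rho^{\prime*}_{\delta h/\delta\mu}(\alpha),\delta f/\delta\alpha\rangle$, which contributes a term in the $\delta f/\delta\alpha$-slot. Second, the defining relation \eqref{eq_mp} of the momentum map, $\langle \mathbf{J}_{X}(x,\alpha),\xi\rangle = \langle\alpha,\rho'_{\xi}(x)\rangle$, converts the second summand into $\langle\mathbf{J}_{X}(\delta h/\delta\alpha,\alpha),\delta f/\delta\mu\rangle$, which contributes a term in the $\delta f/\delta\mu$-slot.

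Putting everything together and using the arbitrariness of $f$, matching the $\delta f/\delta\mu$-coefficients gives $\dot\mu = \mathrm{ad}^{*}_{\delta h/\delta\mu}\mu - \mu\,\partial h/\partial z - \mathbf{J}_{X}(\delta h/\delta\alpha,\alpha)$; matching the $\partial f/\partial z$-coefficients gives $\dot z = \langle\mu,\delta h/\delta\mu\rangle - h$; and matching the $\delta f/\delta\alpha$-coefficients gives $\dot\alpha = -\rho^{\prime*}_{\delta h/\delta\mu}(\alpha)$, as required. The only real obstacle is bookkeeping: tracking signs across the two sign conventions for $\rho^{\prime*}$ and for $\mathbf{J}_{X}$, and making sure the mixed term is routed to the correct slot (one piece into the $\mu$-equation via $\mathbf{J}_{X}$, the other into the $\alpha$-equation via $\rho^{\prime*}$). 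Once this bookkeeping is done, no further ideas are needed beyond the ones already present in the earlier Lie--Poisson--Jacobi proposition.
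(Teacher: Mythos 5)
Your proposal is correct and follows essentially the same route as the paper's own proof: differentiate an arbitrary $f$ along the integral curve, equate with $\{h,f\}_{X^{*}}-f\,\partial h/\partial z$, reroute the cross-term via the adjoint identity for $\rho^{\prime*}$ and the defining relation of $\mathbf{J}_{X}$, and match coefficients in the three slots. Your explicit attention to the sign convention $\langle\rho^{\prime*}_{\xi}(\alpha),x\rangle=-\langle\alpha,\rho^{\prime}_{\xi}(x)\rangle$ is in fact slightly more careful than the paper's intermediate line and is what makes the $\dot{\alpha}$-equation come out with the stated minus sign.
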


\begin{proof}
Let $f\in C^{\infty}(\mathfrak{g}^{*}\times\mathbb{R}\times X^{*})$. Then,
\begin{equation*}
		\dot{f}= \langle df(\mu,z, \alpha), (\dot{\mu},\dot{z},\dot{\alpha}) \rangle = \left\langle \dot{\mu},\frac{\delta f}{\delta \mu} \right\rangle + \frac{\partial f}{\partial z}\dot{z} + \left\langle \dot{\alpha},\frac{\delta f}{\delta \alpha} \right\rangle.
	\end{equation*}
	Also, using the definition of the Jacobi bracket we have that
	\begin{equation*}
        \begin{split}
            \{h,f\}_{X^{*}}(\mu, z, \alpha)=\left\langle \mu,\left[ \frac{\delta h}{\delta \mu},\frac{\delta f}{\delta \mu} \right] \right\rangle & + \left( \left\langle \mu,\frac{\delta h}{\delta \mu} \right\rangle-h \right) \frac{\partial f}{\partial z} - \left( \left\langle \mu,\frac{\delta f}{\delta \mu} \right\rangle - f \right) \frac{\partial h}{\partial z} \\
            & + \left\langle \alpha, \frac{\delta h}{\delta \mu}\frac{\delta f}{\delta \alpha} - \frac{\delta h}{\delta \alpha}\frac{\delta f}{\delta \mu}\right\rangle.
        \end{split}
	\end{equation*}
	Thus using \eqref{hamiltonian:evolution} we have that
	\begin{equation*}
        \begin{split}
            \dot{f}= \{h,f\}_{X^{*}}- f\frac{\partial h}{\partial z} & =\left\langle \text{ad}_{\frac{\delta h}{\delta \mu}}^{*}\mu-\mu\frac{\partial h}{\partial z},\frac{\delta f}{\delta \mu} \right\rangle + \left( \left\langle \mu,\frac{\delta h}{\delta \mu} \right\rangle-h \right) \frac{\partial f}{\partial z} \\
            & + \left\langle \alpha, \rho_{\frac{\delta h}{\delta \mu}}^{'}\frac{\delta f}{\delta \alpha}\right\rangle - \left\langle \alpha,\rho_{\frac{\delta f}{\delta \mu}}^{'}\frac{\delta h}{\delta \alpha}\right\rangle \\
            & = \left\langle \text{ad}_{\frac{\delta h}{\delta \mu}}^{*}\mu-\mu\frac{\partial h}{\partial z},\frac{\delta f}{\delta \mu} \right\rangle + \left( \left\langle \mu,\frac{\delta h}{\delta \mu} \right\rangle-h \right) \frac{\partial f}{\partial z} \\
            & + \left\langle \rho_{\frac{\delta h}{\delta \mu}}^{'*}\alpha, \frac{\delta f}{\delta \alpha}\right\rangle - \left\langle\mathbf{J}_{X}\left(\frac{\delta h}{\delta \alpha},\alpha \right), \frac{\delta f}{\delta \mu}\right\rangle \\ 
        \end{split}
	\end{equation*}
	Since the function $f$ is arbitrary, the result follows.
\end{proof}

\subsection{The Heavy Top with dissipation}

Consider $G = \mathrm{SO}(3)$, with Lagrangian function given by 
\begin{equation}
{L_{\mathbf{e}_{3}}(g,\xi,z) = \frac{1}{2}\langle \xi,\mathbb{I}\xi\rangle - m\mathbf{g}l\langle\mathbf{e}_{3},R\boldsymbol{\chi}\rangle-\gamma z},\nonumber
\end{equation}
where $R\in SO(3)$, $\langle\xi,\xi\rangle=\tr(\xi^{T}\xi)$, for any $\xi\in\mathfrak{g} = \mathfrak{so}(3)$, $\xi(\cdot) = \sum_{i=1}^{3}\xi^{i}(\cdot)e_{i}$, with the elements of the basis of $\mathfrak{g}$  given by
\begin{align}
e_{1} = \begin{bmatrix}
        0 & \phantom{-}0 & \phantom{-}0 \\
        0 & \phantom{-}0 & -1 \\
        0 & \phantom{-}1 & \phantom{-}0
        \end{bmatrix}, \hspace{5pt} 
e_{2} = \begin{bmatrix}
        \phantom{-}0 & \phantom{-}0 & \phantom{-}1 \\
        \phantom{-}0 & \phantom{-}0 & \phantom{-}0 \\
        -1 & \phantom{-}0 & \phantom{-}0
      \end{bmatrix}, \hspace{5pt}
e_{3} = \begin{bmatrix}
        0 & -1 & \phantom{-}0 \\
        1 & \phantom{-}0 & \phantom{-}0 \\
        0 & \phantom{-}0 & \phantom{-}0
        \end{bmatrix},\nonumber
\end{align}
which satisfy
\begin{align}
[e_{1},e_{2}] = e_{3}, \hspace{5pt} [e_{2},e_{3}] = e_{1}, \hspace{5pt} [e_{3},e_{1}] = e_{2},\nonumber
\end{align}
$\mathbb{I}: \mathfrak{g}\to\mathfrak{g}^{*} = \mathfrak{so}(3)^{*}$ is the inertia tensor of the top (it is calculated with respect to the pivot, which is not, in general, the center of mass), $m$ is the mass of the body, $\mathbf{g}$ is the acceleration due to gravity, $\mathbf{e}_{3}$ is the vertical unit vector, $\boldsymbol{\chi}\in\mathbb{R}^{3}$ is the unit vector from the point of support to the direction of the body's center of mass (constant) in body coordinates, $l$ is the length of the line segment between these two points and $\gamma\in\mathbb{R}$.

Under the dual pairing, where $\langle\alpha,\xi\rangle=\tr(\alpha\xi)$, for any $\xi\in\mathfrak{g}$ and $\alpha\in\mathfrak{g}^{*}$, the elements of the basis of $\mathfrak{g}^{*}$ are given by
\begin{align}
e^{1} = \begin{bmatrix}
        0 & \phantom{-}0 & \phantom{-}0\\[4pt]
        0 & \phantom{-}0 & \phantom{-}\dfrac{1}{2}\\[4pt]
        0 & -\dfrac{1}{2} & \phantom{-}0
        \end{bmatrix}, \hspace{5pt} 
e^{2} = \begin{bmatrix}
        \phantom{-}0 & \phantom{-}0 & -\dfrac{1}{2}\\[8pt]
        \phantom{-}0 & \phantom{-}0 & \phantom{-}0\\[4pt]
        \phantom{-}\dfrac{1}{2} & \phantom{-}0 & \phantom{-}0
        \end{bmatrix}, \hspace{5pt}
e^{3} = \begin{bmatrix}
        \phantom{-}0 & \phantom{-}\dfrac{1}{2} & \phantom{-}0\\[4pt]
        -\dfrac{1}{2} & \phantom{-}0 & \phantom{-}0\\[8pt]
        \phantom{-}0 & \phantom{-}0 & \phantom{-}0
        \end{bmatrix}.\nonumber
\end{align}

{It is easy to verify that the Lagrangian function is invariant under the left action of the isotropy group
\begin{align}
G_{\alpha_{0}}=\{g\in G\mid\Ad_{g}^{*}\alpha_{0} = \alpha_{0}\}\cong\mathrm{SO}(2),\nonumber
\end{align}
i.e., rotations about the vertical axis $\mathbf{e}_{3}$, but not $G$-invariant. The potential function breaks the symmetry partially. However, the framework we introduced before will help us to restore the full symmetry.}

Let $X = \mathfrak{g}$ and $\rho$ be the adjoint representation of $G$ on $X$, i.e., $\rho_{g} = \Ad_{g}$, for any $g\in G$. So, $\rho^{*}$ is the coadjoint representation of $G$ on $X^{*} = \mathfrak{g}^{*}$, i.e., $\rho^{*}_{g} = \Ad_{g}^{*}$, for any $g\in G$. Let the extended Lagrangian function $L_{\textnormal{ext}}: G\times\mathfrak{g}\times X^{*}\to\mathbb{R}$ be given as follows
\begin{align}
{L_{\textnormal{ext}}(g,\xi,\breve{\alpha}) = \frac{1}{2}\langle \xi,\mathbb{I}\xi\rangle - m\mathbf{g}l\langle R^{-1}\alpha,\boldsymbol{\chi}\rangle-\gamma z},\nonumber
\end{align}
where $\alpha\in\mathbb{R}^{3}$ is identified with $\breve{\alpha}\in\mathfrak{g}^{*}$. If we set $\breve{\alpha}_{0} = -2e^{3}$ in the extended Lagrangian function, we recover our original Lagrangian function. It is easy to verify that the extended Lagrangian function is $G$-invariant under \eqref{eq_phi}, i.e., $L_{\textnormal{ext}}\circ\Phi_{g} = L_{\textnormal{ext}}$, for any $g \in G$.

Thus, we can define the reduced extended Lagrangian function to be
$$\ell_{ext}(\xi, \alpha, z) = \frac{1}{2}\langle \xi,\mathbb{I}\xi\rangle - m\mathbf{g}l\langle \alpha,\boldsymbol{\chi}\rangle-\gamma z,$$
and by Corollary \ref{cor5}, the Euler--Poincar{\'e}-Herglotz equations (under the identifications $\mathfrak{g}\cong\mathbb{R}^{3}$ and $\mathfrak{g}^{*}\cong\mathbb{R}^{3}$) give the following equations
\begin{align}
\mathbb{I}\dot{\xi} &= {\ad_{\xi}^{*}\mathbb{I} \xi-\ad_{\frac{\partial L_{\textnormal{ext}}}{\partial\alpha}}^{*}\alpha- \gamma \mathbb{I} \xi},\label{eq_ep1_ht}\\
\dot{\alpha} &= {-\ad_{\xi}^{*}\alpha, \hspace{5pt} \alpha(0) = \Ad_{g_{0}^{-1}}^{*}\alpha_{0}},\quad \dot{z}=\ell_{\textnormal{ext}}.\label{eq_diff_ht}
\end{align}

Using the expressions (for more details, see \cite{holm2009geometric}, \cite{marsden2013introduction})
\begin{align}
\ad_{\xi}^{*}\mathbb{I}\xi = \mathbb{I}\xi\times \xi, \hspace{5pt} \ad_{\frac{\partial L_{\textnormal{ext}}}{\partial\alpha}}^{*}\alpha = -m\mathbf{g}l\hspace{2pt}\alpha\times\boldsymbol{\chi}, \hspace{5pt} \ad_{\xi}^{*}\alpha = \alpha\times \xi, \hspace{5pt} {\Ad_{g_{0}^{-1}}^{*}\alpha_{0} = g_{0}\alpha_{0}}\nonumber
\end{align}
and $\mathbb{I}\in\mathbb{R}^{3\times 3}$ is the inertia matrix,  \eqref{eq_ep1_ht}--\eqref{eq_diff_ht} give the following equations
\begin{align}
\mathbb{I}\dot{\xi} &= \mathbb{I}\xi\times \xi-m\mathbf{g}l\hspace{2pt}\boldsymbol{\chi}\times\alpha - \gamma \mathbb{I} \xi,\nonumber\\
\dot{\alpha} &={ -\alpha\times \xi, \hspace{5pt} \alpha(0) = g_{0} \alpha_{0}},\,\,\dot{z}=\ell_{\textnormal{ext}}.\nonumber
\end{align}

In the case $C=\gamma\mathbb{I}$ we obtain the equations for rigid body attitude dynamics studied in \cite{shen2003asymptotic}, \cite{cho2003mathematical} for the triaxial attitude control testbed given in \cite{bernstein2001development}.

We do not give all the details and leave it up to reader to verify that the Lie--Poisson--Jacobi equations \eqref{eq_lp}--\eqref{eq_diff_lp} are given by 
\begin{align}
\dot{\mu} &= \mu\times\mathbb{I}^{-1}\mu-m\mathbf{g}l\hspace{2pt}\boldsymbol{\chi}\times\alpha-\gamma\mu ,\nonumber\\
\dot{\alpha} &= {-\alpha\times\mathbb{I}^{-1}\mu, \hspace{5pt} \alpha(0) = g_{0}\alpha_{0}},\,\,\quad \dot{z} = \frac{1}{2}\mu^{T}\mathbb{I}\mu -m\mathbf{g}l\langle g^{-1}\alpha,\boldsymbol{\chi}\rangle- \gamma z.\nonumber
\end{align}

\section{Conclusions and Future Work}

In this paper, we have established a geometric framework to study reduced contact dynamics on Lie groups. In future work we wish to employ this framework to study \begin{itemize}
   \item the stability of reduced contact systems,
    \item Noether's theorems and dissipated quantities - the analogous notion of conserved quantities in symplectic dynamics.
    \item contact structure-preserving discretization schemes of contact dynamics on Lie groups and semi-direct products. Discretization schemes complying with the reduction process could also be used to find and numerically implement optimal control policies in systems with dissipation.
\end{itemize}




\section*{Acknowledgments}  The authors acknowledge financial support from Grant PID2019-106715GB-C21 funded by MCIN/AEI/ 10.13039/501100011033. J.C. Marrero and E. Padr\'on acknowledge financial support from the Spanish Ministry of Science and Innovation under grant PGC2018-098265-B-C32.

\printbibliography

@book{Abraham1978,
  title = {Foundations of Mechanics},
  author = {Abraham, Ralph and Marsden, Jerrold E.},
  year = {1978},
  edition = {2},
  publisher = {{AMS Chelsea Publishing}},
  address = {{Redwood City, CA}},
  keywords = {Geometric Mechanics,Hamiltonian systems,Symmetries,Symplectic Geometry}
}

@article{Bravetti2018,
  title = {Contact Geometry and Thermodynamics},
  author = {Bravetti, Alessandro},
  year = {2018},
  month = oct,
  volume = {16},
  pages = {1940003},
  issn = {0219-8878},
  doi = {10.1142/S0219887819400036},
  abstract = {These are the lecture notes for the course given at the ``XXVII International Fall Workshop on Geometry and Physics'' held in Seville (Spain) in September 2018. We review the geometric formulation of equilibrium thermodynamics by means of contact geometry, together with the associated metric structures arising from thermodynamic fluctuation theory and the use of Hamiltonian flows to describe thermodynamic processes. Finally, we discuss the state of the art of the subject, its connection with other areas of physics, and suggest several possible further directions.},
  journal = {Int. J. Geom. Methods Mod. Phys.},
  keywords = {Thermodynamics},
  number = {supp01}
}

@article{de2017cosymplectic,
  title={Cosymplectic and contact structures for time-dependent and dissipative Hamiltonian systems},
  author={De Le{\'o}n, M and Sard{\'o}n, C},
  journal={J. Phys. A: Math. Theor.},
  volume={50},
  number={255205},
  year={2017}
}

@article{de2020review,
  title={A review on contact Hamiltonian and Lagrangian systems},
  author={de Le{\'o}n, Manuel and Lainz, Manuel},
  journal={Revista de la Real Academia de Ciencias Canaria},
  volume={XXXI-Mathematics},
  number={},
  pages={1--46},
  year={2019}
}

@article{de2023optimal,
  title={Optimal control, contact dynamics and Herglotz variational problem},
  author={de Le{\'o}n, Manuel and Lainz, Manuel and Mu{\~n}oz-Lecanda, Miguel C},
  journal={Journal of Nonlinear Science},
  volume={33},
  number={1},
  pages={9},
  year={2023},
  publisher={Springer}
}

@article{ibort1997reduction,
  title={Reduction of Jacobi manifolds},
  author={Ibort, Alberto and de Leon, Manuel and Marmo, Giuseppe},
  journal={Journal of Physics A: Mathematical and General},
  volume={30},
  number={8},
  pages={2783},
  year={1997},
  publisher={IOP Publishing}
}

@article{joana1989,
  title={R\'eduction des vari\'et\'es de Jacobi},
  author={Nunes da Costa, Joana Margarida},
  journal={C.R. Acad. Sci. Paris, Série I},
  volume={308},
  pages={101-103},
  year={1989}
}

@article{anahory2021geometry,
  title={On the geometry of discrete contact mechanics},
  author={Anahory Simoes, Alexandre and Mart{\'\i}n de Diego, David and Lainz Valc{\'a}zar, Manuel and de Le{\'o}n, Manuel},
  journal={Journal of Nonlinear Science},
  volume={31},
  number={3},
  pages={53},
  year={2021},
  publisher={Springer}
}

@article{stratoglou2022optimal,
  title={Optimal Control with Broken Symmetry of Multi-Agent Systems on Lie Groups},
  author={Stratoglou, Efstratios and Colombo, Leonardo and Ohsawa, Tomoki},
  journal={arXiv preprint arXiv:2204.06050},
  year={2022}
}

@article{stratoglou2022reduction,
  title={Reduction in optimal control with broken symmetry for collision and obstacle avoidance of multi-agent system on Lie groups},
  author={Stratoglou, Efstratios and Simoes, Alexandre Anahory and Colombo, Leonardo J},
  journal={arXiv preprint arXiv:2207.06806},
  year={2022}
}

@inproceedings{shen2003asymptotic,
  title={Asymptotic stability of rigid body attitude systems},
  author={Shen, Jinglai and Sanyal, Ainit K and McClamroch, N Harris},
  booktitle={42nd IEEE International Conference on Decision and Control (IEEE Cat. No. 03CH37475)},
  volume={1},
  pages={544--549},
  year={2003},
  organization={IEEE}
}

@article{iglesias2000some,
  title={Some linear Jacobi structures on vector bundles},
  author={Iglesias, David and Marrero, Juan C},
  journal={Comptes Rendus de l'Acad{\'e}mie des Sciences-Series I-Mathematics},
  volume={331},
  number={2},
  pages={125--130},
  year={2000},
  publisher={Elsevier}
}

@article{iglesias2001some,
  title={Generalized Lie bialgebroids and Jacobi structure},
  author={Iglesias, David and Marrero, Juan C},
  journal={J. Geom. Phys.},
  volume={40},
  number={2},
  pages={176 – 199},
  year={2001},
  publisher={}
}

@book{iglesiasphd,
  title = {Grupos y grupoides de Lie y estructuras de Jacobi},
  author = {David Iglesias},
  year = {2003},
  series = {},
  publisher = {PhD. Thesis},
  location = {Universidad de la Laguna (Spain)}
}

@article{marle1997schouten,
  title={The Schouten-Nijenhuis bracket and interior products},
  author={Marle, Charles-Michel},
  journal={Journal of Geometry and Physics},
  volume={23},
  number={3-4},
  pages={350--359},
  year={1997},
  publisher={Elsevier}
}

@book{Arnold1978,
  title = {Mathematical {{Methods}} of {{Classical Mechanics}}},
  author = {Arnold, V. I.},
  year = {1978},
  series = {Graduate {{Texts}} in {{Mathematics}}},
  publisher = {{Springer-Verlag}},
  location = {{New York}}}

@article{ibanez1997co,
  title={Co-isotropic and Legendre-Lagrangian submanifolds and conformal Jacobi morphisms},
  author={Ib{\'a}{\~n}ez, Ra{\'u}l and de Le{\'o}n, Manuel and Marrero, Juan C and de Diego, David Mart{\'\i}n},
  journal={Journal of Physics A: Mathematical and General},
  volume={30},
  number={15},
  pages={5427},
  year={1997},
  publisher={IOP Publishing}
}

@article{bloch1996euler,
  title={The Euler-Poincar{\'e} equations and double bracket dissipation},
  author={Bloch, Anthony and Krishnaprasad, PS and Marsden, Jerrold E and Ratiu, Tudor S},
  journal={Communications in mathematical physics},
  volume={175},
  number={1},
  pages={1--42},
  year={1996},
  publisher={Springer}
}

@article{holm1998euler,
  title={The Euler--Poincar{\'e} equations and semidirect products with applications to continuum theories},
  author={Holm, Darryl D and Marsden, Jerrold E and Ratiu, Tudor S},
  journal={Advances in Mathematics},
  volume={137},
  number={1},
  pages={1--81},
  year={1998},
  publisher={Elsevier}
}

@article{gay2011clebsch,
  title={Clebsch optimal control formulation in mechanics},
  author={Gay-Balmaz, Fran{\c{c}}ois and Ratiu, Tudor S},
  journal={J. Geom. Mech},
  volume={3},
  number={1},
  pages={41--79},
  year={2011}
}

@inproceedings{bernstein2001development,
  title={Development of air spindle and triaxial air bearing testbeds for spacecraft dynamics and control experiments},
  author={Bernstein, Dennis S and McClamroch, N Harris and Bloch, Anthony},
  booktitle={Proceedings of the 2001 American Control Conference.(Cat. No. 01CH37148)},
  volume={5},
  pages={3967--3972},
  year={2001},
  organization={IEEE}
}

@article{cho2003mathematical,
  title={Mathematical models for the triaxial attitude control testbed},
  author={Cho, Sangbum and Shen, Jinglai and McClamroch, N Harris},
  journal={Mathematical and Computer Modelling of Dynamical Systems},
  volume={9},
  number={2},
  pages={165--192},
  year={2003},
  publisher={Taylor \& Francis}
}

@book{marsden2013introduction,
  title={Introduction to mechanics and symmetry: a basic exposition of classical mechanical systems},
  author={Marsden, Jerrold E and Ratiu, Tudor S},
  volume={17},
  year={2013},
  publisher={Springer Science \& Business Media}
}

@book{holm2009geometric,
  title={Geometric mechanics and symmetry: from finite to infinite dimensions},
  author={Holm, Darryl D and Schmah, Tanya and Stoica, Cristina},
  volume={12},
  year={2009},
  publisher={Oxford University Press}
}

@article{leo2023,
  title={Lie-Poisson reduction for optimal control of left-invariant control systems with subgroup symmetry},
  author={Colombo, Leonardo and Stratoglou, Efstratios},
  journal={Reports in mathematical physics},
  volume={91},
  number={1},
  pages={131-141},
  year={2023}
}

@article{bloch2017optimal,
  title={Optimal control problems with symmetry breaking cost functions},
  author={Bloch, Anthony M and Colombo, Leonardo J and Gupta, Rohit and Ohsawa, Tomoki},
  journal={SIAM Journal on Applied Algebra and Geometry},
  volume={1},
  number={1},
  pages={626--646},
  year={2017},
  publisher={SIAM}
}

@article{gay2010reduction,
  title={Reduction theory for symmetry breaking with applications to nematic systems},
  author={Gay-Balmaz, Fran{\c{c}}ois and Tronci, Cesare},
  journal={Physica D: Nonlinear Phenomena},
  volume={239},
  number={20-22},
  pages={1929--1947},
  year={2010},
  publisher={Elsevier}
}

@article{marsden1984semidirect,
  title={Semidirect products and reduction in mechanics},
  author={Marsden, Jerrold E and Ra{\c{t}}iu, Tudor and Weinstein, Alan},
  journal={Transactions of the american mathematical society},
  volume={281},
  number={1},
  pages={147--177},
  year={1984}
}

@book{Bloch,
	author        = {A. Bloch},
	title         = {Nonholonomic Mechanics and Control},
	year          = {2015},
	publisher     = {Springer, Interdisciplinary Applied Mathematics 24},
	location      = {}
}

@article{Bravetti2017,
  title = {Contact {{Hamiltonian Dynamics}}: {{The Concept}} and {{Its Use}}},
  author = {Bravetti, Alessandro},
  year = {2017},
  month = oct,
  volume = {19},
  pages = {535},
  issn = {1099-4300},
  doi = {10.3390/e19100535},
  journal = {Entropy},
  keywords = {Applications,Contact Geometry,Hamiltonian systems,Mathematical Physics,Survey,Thermodynamics},
  language = {en},
  number = {12}
}

@article{colombo2022contact,
  title={Contact Lagrangian systems subject to impulsive constraints},
  author={Colombo, Leonardo and de Le{\'o}n, Manuel and L{\'o}pez-Gord{\'o}n, Asier},
  journal={Journal of Physics A: Mathematical and Theoretical},
  volume={55},
  number={42},
  pages={425203},
  year={2022},
  publisher={IOP Publishing}
}

@article{lopez2022nonsmooth,
  title={Nonsmooth Herglotz variational principle},
  author={L{\'o}pez-Gord{\'o}n, Asier and Colombo, Leonardo and de Le{\'o}n, Manuel},
  journal={arXiv preprint arXiv:2208.02033},
  year={2022}
}

@article{deLeon2018,
author = {de León,Manuel  and Lainz Valcázar,Manuel },
title = {Contact Hamiltonian systems},
journal = {Journal of Mathematical Physics},
volume = {60},
number = {10},
pages = {102902},
year = {2019},
doi = {10.1063/1.5096475},

URL = { 
        https://doi.org/10.1063/1.5096475
    
},
eprint = { 
        https://doi.org/10.1063/1.5096475
   
}
}

@article{deLeon2019,
author = {de León, Manuel and Lainz Valcázar, Manuel},
title = {Singular Lagrangians and precontact Hamiltonian systems},
journal = {International Journal of Geometric Methods in Modern Physics},
volume = {16},
number = {10},
pages = {1950158},
year = {2019},
doi = {10.1142/S0219887819501585},

URL = { 
        https://doi.org/10.1142/S0219887819501585
    
},
eprint = { 
        https://doi.org/10.1142/S0219887819501585
    
}
,
    abstract = { In this paper, we discuss the singular Lagrangian systems on the framework of contact geometry. These systems exhibit a dissipative behavior in contrast with the symplectic scenario. We develop a constraint algorithm similar to the presymplectic one studied by Gotay and Nester (the geometrization of the well-known Dirac–Bergmann algorithm). We also construct the Hamiltonian counterpart and prove the equivalence with the Lagrangian side. A Dirac–Jacobi bracket is constructed similar to the Dirac bracket. }
}

@article{Kirillov1976,
  title = {Local {{Lie}} Algebras},
  author = {Kirillov, Aleksandr Aleksandrovich},
  year = {1976},
  volume = {31},
  pages = {57--76},
  issn = {0042-1316},
  journal = {Akademiya Nauk SSSR i Moskovskoe Matematicheskoe Obshchestvo Uspekhi Matematicheskikh Nauk},
  keywords = {Jacobi Manifolds},
  number = {4-190}
}

@article{Lichnerowicz1978,
  title = {Les Vari{\'e}t{\'e}s de {{Jacobi}} et Leurs Alg{\`e}bres de {{Lie}} Associ{\'e}es},
  author = {Lichnerowicz, Andr{\'e}},
  year = {1978},
  volume = {57},
  pages = {453--488},
  issn = {0021-7824},
  journal = {J. Math. Pures Appl. (9)},
  mrnumber = {524629},
  number = {4}
}

@book {marle,
	AUTHOR = {Libermann, Paulette and Marle, Charles-Michel},
	TITLE = {Symplectic geometry and analytical mechanics},
	SERIES = {Mathematics and its Applications},
	VOLUME = {35},
	NOTE = {Translated from the French by Bertram Eugene Schwarzbach},
	PUBLISHER = {D. Reidel Publishing Co., Dordrecht},
	YEAR = {1987},
	PAGES = {xvi+526},
	ISBN = {90-277-2438-5},
	MRCLASS = {58F05 (53C57 70Hxx)},
	MRNUMBER = {882548},
	MRREVIEWER = {I. Vaisman},
	DOI = {10.1007/978-94-009-3807-6},
	URL = {https://doi.org/10.1007/978-94-009-3807-6},
}

\end{document}